\DeclareSymbolFont{frenchscript}{OMS}{ztmcm}{m}{n}
\DeclareMathSymbol{\Pow}{\mathord}{frenchscript}{80}  % powerset
\newcommand{\Next}{\operatorname{%
  \tikz[baseline]{
    \draw[line width=.12ex]
      (0,.6ex) circle (.8ex);
  }}}{}
  \newcommand{\Eventually}{\operatorname{%
  \tikz[baseline]{
    \draw[line width=.12ex,line join=round]
      (0ex,.6ex) -- (.95ex,1.55ex) -- (1.9ex,.6ex) -- (.95ex,-.35ex) -- cycle;
  }}}{}
  \newcommand{\Always}{\operatorname{%
  \tikz[baseline]{
    \draw[draw=white] (-0.2ex,0ex) -- (1.7ex,0ex);
    \draw[line width=.12ex,line join=round]
      (0ex,-.2ex) -- (0ex,1.3ex) -- (1.5ex,1.3ex) -- (1.5ex,.-.2ex) -- cycle;
  }}}{}
   \DeclareMathOperator*{\bigdunion}{%
       \scalerel*{\bigcup\mathchoice%
           {\hspace*{-0.75em}\scalebox{0.7}{$\lightning$}\hspace*{0.35em}}%
           {\hspace*{-0.61em}\scalebox{0.55}{\raisebox{0.2em}{$\lightning$}}\hspace*{0.24em}}%
           {ERROR}%
           {ERROR}}{\bigcup}}
  \newcommand{\Until}{\mathbin{\,\mathcal{U}}}{}
  \newcommand{\Release}{\mathbin{\,\mathcal{R}}}{}
  \newcommand{\dunion}{\mathbin{%
    \cup\hspace*{-0.4em}\scalebox{0.4}{\raisebox{0.36em}{$\lightning$}}\hspace*{0.35em}}}{}
  \newcommand{\LT}{linear-time\xspace}
  \newcommand{\LTL}{Linear-time Temporal Logic\xspace}
\newcommand{\plat}[1]{\raisebox{0pt}[0pt][0pt]{#1}}     % no vertical space
\title{Semantics for \LTL with Finite Observations}
\author{Rayhana Amjad
\institute{University of Edinburgh\\ Edinburgh, Scotland}
\email{rayhana.amjad@ed.ac.uk}
\and
Rob van Glabbeek
\institute{University of Edinburgh\\ Edinburgh, Scotland}
\email{rvg@cs.stanford.edu}
\and
Liam O'Connor
\institute{Australian National University\\
Canberra, Australia}
\email{liam.oconnor@anu.edu.au}
}
\newcommand{\ltlthree}[0]{LTL$_3$\xspace}
\newcommand{\prepend}{\triangleright}
\newcommand{\dclose}{\mathop{\lightning}}
\newcommand{\prefixes}{\mathop{\downarrow}}
\newcommand{\extensions}{\mathop{\uparrow}}
\newcommand{\ftraces}{\Sigma^\ast}
\newcommand{\itraces}{\Sigma^\omega}
\newcommand{\traces}{\Sigma^\infty}
\newcommand{\drop}[2]{#1_{\rvert #2}}
\newcommand{\true}{\textup{\texttt{T}}}
\newcommand{\false}{\textup{\texttt{F}}}
\newcommand{\unknown}{\texttt{?}}
\newcommand{\semantics}[1]{\llbracket #1 \rrbracket}
\newcommand{\dbigunion}{\bigdunion}
\newcommand{\SemOp}[1]{%
  \tikz[baseline]{
    \draw[draw=white] (-0.2ex,0ex) -- (1.7ex,0ex);
    \draw[line width=.12ex,line join=round, rounded corners=0.2em]
      (0ex,-.2ex) -- (0ex,1.3ex) -- (1.5ex,1.3ex) -- (1.5ex,.-.2ex) -- cycle;
    \node at (0.75ex,0.55ex) {\tiny $#1$};
  }}
\newcommand{\Nextaif}{\operatorname{\SemOp{\Next}}}{}
\newcommand{\Negaif}{\operatorname{\SemOp{\neg}}}{}
\newcommand{\Trueaif}{\SemOp{\top}}
\newcommand{\Propaif}[1]{\left\ullcorner #1\right\ulrcorner}
\newcommand{\Untilaif}{\mathbin{\SemOp{\Until}}}
\newcommand{\Conjaif}{\mathbin{\SemOp{\land}}}
\newcommand{\Disjaif}{\mathbin{\SemOp{\lor}}}
\newcommand{\Propaifthree}[1]{\Propaif{#1}_3}
\newtheorem{thm}{Theorem}
\begin{document}

\maketitle

\begin{abstract}
\ltlthree is a multi-valued variant of \LTL for runtime verification applications. The semantic descriptions of \ltlthree in previous work are given only in terms of the relationship to conventional LTL\@. Our approach, by contrast, gives a full model-based inductive accounting of the semantics of \ltlthree, in terms of families of \emph{definitive prefix sets}. We show that our definitive prefix sets are isomorphic to \LT temporal properties (sets of infinite traces), and thereby show that our semantics of \ltlthree directly correspond to the semantics of conventional LTL\@. In addition, we formalise the formula progression evaluation technique, popularly used in runtime verification and testing contexts, and show its soundness and completeness up to finite traces with respect to our semantics. All of our definitions and proofs are mechanised in Isabelle/HOL.
\end{abstract}
\section{Introduction}

\LTL (LTL)~\cite{ltl} is one of the most commonly-used logics for the specification of reactive systems. It adds to propositional logic temporal modalities to describe \emph{behaviours}: completed,
infinite traces describing the execution of a system over time. In the context of runtime monitoring or testing, however, we can only make finite observations, and must therefore turn to variants of LTL with finite traces as models. The oldest such variant, commonly attributed to Pnueli\footnote{Such logics are found in many early papers on LTL with Pnueli as a
coauthor such as Lichtenstein et al.~\cite{fltl}, but Manna and Pnueli~\cite{ltlsafety}, which
is usually cited, does not mention finite traces at all.}, concerns finite or infinite \emph{completed} traces, but this is also not suitable for the context of runtime monitoring, as our finite observations are not completed traces, but finite prefixes of infinite behaviours: \emph{partial} traces. 

Bauer et al.~\cite{ltl3tosem}  describe a variant of LTL for partial traces
called {\ltlthree} that distinguishes between those formulae that can be \emph{definitively} said to be true or false from just the partial trace provided, and
those formulae which are \emph{indeterminate}, requiring further states to evaluate definitively. As
we shall see in Section~\ref{sec:ltl}, the semantics of \ltlthree in the literature are
given only in terms of conventional LTL, and Bauer et al.~\cite{bauercomparing} further claim that
\ltlthree \emph{cannot} be given an inductive semantics, a claim that is refuted by the present paper.

We give a compositional, inductive semantics for \ltlthree, in terms of families of \emph{definitive prefix sets}: sets of all (finite or infinite) traces which are sufficient to definitively establish or refute the given formula. We introduce the concept of definitive prefix sets in Section~\ref{sec:dps}, and our semantics in Section~\ref{sec:sems}. We show that definitive prefix sets are determined uniquely by their infinite traces, i.e., that our definitive prefix sets are isomorphic to \LT temporal properties, and thereby we show that the semantics of conventional LTL and of \ltlthree correspond directly. \ltlthree, then, can be understood merely as a different presentation of conventional LTL. 

In Section~\ref{sec:fp} we turn to \emph{formula progression}, a popular technique for evaluating formulae against a finite trace where the formula is evaluated state-by-state, in a style reminiscent of operational semantics or the Brzozowski derivative. Bauer and Falcone~\cite{fpltl3fm} claim without proof that formula progression yields an equivalent semantics to \ltlthree. In this paper, we make this statement formally precise, and prove soundness and completeness (modulo a sufficiently powerful simplifier) of the formula progression technique with respect to our semantics.

Finally in Section~\ref{sec:disc}, we relate our work to other characterisations of prefixes, traces, and properties, as well as to other multi-valued variants of LTL.  All of our work has been mechanised in the Isabelle/HOL proof assistant, proofs of which are available for download~\cite{ourproofs}.

\section[LTL]{\LTL}
\label{sec:ltl}

\begin{figure}
\begin{center}
$$
\begin{array}{llcl}
	\text{Formulae}&\varphi, \psi & ::= & \top \mid a \\
	& & \mid & \neg \varphi \\
	& & \mid & \varphi \land \psi \\
	& & \mid & \Next \varphi \\
	& & \mid & \varphi \Until \psi \\
\text{Atomic propositions} & a & \in & A \\
\text{Traces} & t, u & \in & \traces \\
\text{States} & \Sigma & = & \Pow(A) \\[1em]
\end{array}
$$
Abbreviations:
$$
\begin{array}{lcl}
	\bot & \triangleq & \neg \top \\
	\varphi \lor \psi & \triangleq & \neg (\neg \varphi \land \neg \psi) \\
	\Eventually \varphi & \triangleq & \top \Until \varphi \\
	\varphi \Release \psi & \triangleq & \neg (\neg \varphi \Until \neg \psi)\\
	\Always \varphi & \triangleq & \bot \Release \varphi
\end{array}
$$
\end{center}
\caption{Syntax of LTL}	
\label{fig:syntax}
\end{figure}

Figure~\ref{fig:syntax} describes the syntax of LTL formulae and adjacent definitions. LTL extends propositional logic over \emph{states} (sets of atomic propositions) with temporal operators to produce a logic over \emph{traces}, sequences of states. We denote the set of all states as $\Sigma$. 
A trace $t$ may be finite (in $\ftraces$) or infinite (in $\itraces$). We denote the set of all traces, i.e. $\ftraces \cup \itraces$, as $\traces$. Two traces $t$ and $u$ may be concatenated in the obvious way, written as $tu$. If $t$ is infinite, then $tu = t$. The empty trace is denoted $\varepsilon$.

Our formulation takes conjunction, negation, atomic propositions and the temporal operators \emph{next} ($\Next$) and \emph{until} ($\Until$) as primitive, with disjunction and the temporal operators \emph{eventually} ($\Eventually$), \emph{always} ($\Always$) and \emph{release} ($\Release$) derived from these primitives. 

Figure~\ref{fig:ltlsemantics} gives the semantics of conventional LTL, as a satisfaction relation
whose models are infinite traces. Here $t_0$ denotes the first state of a trace $t$. As we only include future temporal operators, we can advance to the future by dropping initial prefixes from the trace. The notation $\drop{t}{n}$ denotes the trace $t$ without the first $n$ states. If $n$ is greater than the length of $t$, the result of $\drop{t}{n}$ is the empty trace $\varepsilon$. Note that our \emph{until} operator ($\Until$) is \emph{strong}, in that $\varphi \Until \psi$ requires that $\psi$ eventually becomes true at some point in the trace. 
\begin{figure}
\begin{displaymath}
	\boxed{\itraces \models \varphi}
\end{displaymath}
\begin{displaymath}
\begin{array}{lcl}
    t \models \top \\
	t \models a & \text{iff} & a \in t_0 \\
	t \models \neg \varphi & \text{iff} & t \nmodels \varphi\\
	t \models \varphi \land \psi & \text{iff} & t \models \varphi\ \text{and}\ t \models \psi\\
	t \models \Next \varphi & \text{iff} & \drop{t}{1} \models \varphi \\
	t \models \varphi \Until \psi & \text{iff} & \text{there exists}\ i\ \text{s.t.}\ \drop{t}{i} \models \psi\ \text{and}\ \forall j < i.\ \drop{t}{j} \models \varphi
\end{array}	
\end{displaymath}
\caption{Semantics of conventional LTL}	
\label{fig:ltlsemantics}
\end{figure}

Bauer et al.~\cite{ltl3tosem} describe \ltlthree as a \emph{three-valued} logic that interprets LTL formulae on finite prefixes to obtain a truth value in $\mathbb{B}_3 = \{\true,\false,\unknown\}$. For a formula $\varphi$ and a finite prefix $t$, the truth value $\true$ indicates that $\varphi$ can be definitively established from $t$ alone, whereas $\false$ indicates that $\varphi$ can be definitively refuted from $t$ alone. The third value $\unknown$ indicates that the formula $\varphi$ can neither be established nor refuted from $t$ alone:
$$
[t \models_3 \varphi] = \begin{cases} \true & \text{if}\ \forall u \in \itraces.\ tu \models \varphi \\
    \false & \text{if}\ \forall u \in \itraces.\ tu \nmodels \varphi  \\
    \unknown & \text{otherwise}\end{cases}
$$%
Because the truth value $\unknown$ indicates merely that neither $\true$ nor $\false$ apply, \ltlthree can be better understood as a two-valued \emph{partial logic}~\cite{partiallogic}, where $\unknown$ indicates the \emph{absence} of a truth value. In this view, \ltlthree only gives truth values when the trace is \emph{definitive}, i.e., when the answer given will not change regardless of how the trace is extended.

Bauer et al.~\cite{ltl3tosem} note that this presentation of  \ltlthree is inherently non-inductive, i.e., the answer given for a compound formula cannot be produced by combining the answers for its components.  To see why, consider the formula $\varphi = \Eventually a \lor \Eventually \neg a $. Using the semantics above, we have $[\varepsilon \models_3 \varphi ] = \true$ but each component of $\varphi$ produces no definitive answer for the empty trace, i.e. $[\varepsilon \models_3 \Eventually a] = [\varepsilon \models_3 \Eventually \neg a] = \unknown$. Likewise both components of the formula $\psi = \Eventually b \lor \Eventually \neg c$ produce the answer $\unknown$ for the empty trace, but unlike $\varphi$, we have $[ \varepsilon \models_3 \psi] = \unknown$. Therefore, there is no way to combine two $\unknown$ answers in a disjunction that produces correct answers for both $\varphi$ and $\psi$. Because of this, Bauer et al.~\cite{ltl3tosem} claim that inductive semantics are \emph{impossible} for \ltlthree. As we shall see, however, this claim applies only to the multi-valued semantics defined above. In our development, which associates sets of traces to each formula, the semantics for a given formula can indeed be compositionally constructed from the semantics of its components.

\section{Definitive Prefix Sets}
\label{sec:dps}
In this section, we develop a theory of \emph{definitive prefixes}, which we will use in Section~\ref{sec:sems} to give a semantics to \ltlthree. We denote the set of \emph{prefixes} of a trace $t$ as $\prefixes t$:
\[\prefixes t \triangleq \{ u \mid \exists v \in \traces.\ t = uv \}\]
We also generalise this notation to sets, so $\prefixes X$ is the set of all prefixes of traces in $X$. The set of all \emph{extensions} of a trace $t$ is likewise written as $\extensions t$:
\[
\extensions t \triangleq \{ tu \mid u \in \traces \}
\]
The set of \emph{definitive prefixes} of a set of traces $X$ is written $\dclose X$. This is the set of all traces for which all extensions are a prefix of a trace in $X$.
\[
\dclose X \triangleq \{ t \mid \extensions t \subseteq \prefixes X \}
\]
Equivalently, $t$ is a definitive prefix of $X$ iff $X$ contains all infinite extensions of $t$: $\dclose X = \{ t \mid \extensions t \cap \Sigma^\omega \subseteq X \}$.\linebreak[4]
Intuitively, this means that $\dclose X$ contains all those traces for which reaching $X\cap\Sigma^\omega$ is in some way \emph{inevitable}, even if it hasn't happened yet. Definitive prefixes are therefore similar to the notion of \emph{good} and  \emph{bad} prefixes from Kupferman and Vardi~\cite{kupfermanvardi}, just without any moral judgement (see Section~\ref{sec:goodbad}). 

 A set $X$ of traces is called \emph{definitive} iff $X = \dclose X$. Let $\mathcal{D} \subseteq \Pow(\traces)$ denote the set of all definitive sets.

For any set of traces $X$, we have the following straightforwardly from the definitions:
\begin{itemize}
\item All definitive prefixes are prefixes, i.e. $\dclose X \subseteq \prefixes X$.
\item The set $\dclose X$ itself is definitive, i.e. $\dclose \dclose X = \dclose X$.
\item Any extension of a definitive prefix is also a definitive prefix, i.e. $ \forall t \in \dclose X.\ \uparrow t \subseteq \dclose X$.
\item The definitive prefix operator $\lightning$ distributes over intersection, i.e $\lightning (\bigcap_{i \in I} X_i) = \bigcap_{i \in I} \lightning X_i$.
\end{itemize}
The sets $\emptyset$ and $\traces$ are both definitive, and the definitive sets are closed under intersection, i.e., for a set of definitive sets $S$, $\lightning \bigcap S = \bigcap S$. This follows from the distributivity theorem above. The definitive sets are not closed under union, however. To see why, consider when $\Sigma = \{\texttt{A}, \texttt{B}\}$ and the set $X_\texttt{A}$ contains all traces starting with $\texttt{A}$ and the set $X_\texttt{B}$ contains all traces starting with $\texttt{B}$. The sets $X_\texttt{A}$ and $X_\texttt{B}$ are both definitive, but their union is not: neither $X_\texttt{A}$ nor $X_\texttt{B}$ contain the empty trace $\varepsilon$, but $ \varepsilon \in \dclose (X_\texttt{A} \cup X_\texttt{B})$, as all extensions of $\varepsilon$ (i.e. all non-empty traces) must begin with either $\texttt{A}$ or $\texttt{B}$.

\subsection{Lattice Properties}
Define the \emph{definitive union}, written $\dbigunion S$ or $X \dunion Y$ in the binary case, as merely the definitive prefixes of the union: 
$$
\dbigunion S \triangleq \dclose \bigcup S
$$
\begin{thm}The definitive union gives least upper bounds for definitive sets ordered by set inclusion, i.e., for a set $S \subseteq \mathcal{D}$ of definitive sets:
\begin{itemize}
\item For all $X \in S$, $X \subseteq \bigdunion S$.
\item If there is a definitive set $Z$ such that $\forall X \in S.\ X \subseteq Z$, then $\bigdunion S \subseteq Z$.
\end{itemize}
\end{thm}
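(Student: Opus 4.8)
The plan is to reduce both statements to a single monotonicity property of the definitive-prefix operator $\dclose$: that $Y \subseteq Y'$ implies $\dclose Y \subseteq \dclose Y'$. This property is not among the four facts listed above the statement, but it follows immediately from the equivalent characterisation $\dclose X = \{ t \mid \extensions t \cap \itraces \subseteq X \}$ given earlier: if $\extensions t \cap \itraces \subseteq Y$ and $Y \subseteq Y'$, then $\extensions t \cap \itraces \subseteq Y'$, so $t \in \dclose Y'$. (Alternatively, monotonicity is a consequence of distributivity over intersection, since $Y \subseteq Y'$ gives $Y = Y \cap Y'$ and hence $\dclose Y = \dclose Y \cap \dclose Y' \subseteq \dclose Y'$.) I would establish this first, as it is the only piece of machinery the two bullets need.

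For the first bullet, I would fix an arbitrary $X \in S$. Since $X \subseteq \bigcup S$, monotonicity yields $\dclose X \subseteq \dclose \bigcup S = \dbigunion S$. But $X$ is definitive, so $X = \dclose X$, and therefore $X \subseteq \dbigunion S$, as required. For the second bullet, suppose $Z$ is a definitive set with $X \subseteq Z$ for every $X \in S$. Then $\bigcup S \subseteq Z$, so monotonicity gives $\dbigunion S = \dclose \bigcup S \subseteq \dclose Z$; and since $Z$ is definitive, $\dclose Z = Z$, whence $\dbigunion S \subseteq Z$.

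The whole argument is short, and its only real content is the observation that being definitive means being a fixed point of $\dclose$: this is precisely what lets me replace $\dclose X$ by $X$ at the lower end of the first bullet and $\dclose Z$ by $Z$ at the upper end of the second. I would therefore expect no genuine obstacle, only a little bookkeeping. The one point worth stating explicitly is that $\dbigunion S$ is itself definitive — so that it is a legitimate element of the ordered set $\mathcal{D}$ of definitive sets whose least upper bound we are computing — which is immediate from the idempotence $\dclose\dclose X = \dclose X$ noted above.
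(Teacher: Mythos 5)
Your proof is correct and matches the paper's approach: the paper disposes of this theorem with ``Follows from definitions,'' and your argument---monotonicity of $\dclose$ (obtained from either of the two characterisations, exactly as the paper makes available) combined with the fixed-point property $X = \dclose X$ of definitive sets---is precisely the routine expansion of that remark. Your explicit note that $\bigdunion S$ is itself definitive, by idempotence $\dclose\dclose X = \dclose X$, is a sensible detail to state, since the theorem implicitly asserts the least upper bound lives in $\mathcal{D}$.
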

\begin{proof}Follows from definitions.	
\end{proof}

\noindent Thus, the definitive sets $\mathcal{D}$ ordered by set inclusion form a complete lattice, where the supremum is the definitive union, the infimum is the intersection, the greatest element $\top$ is $\traces$ and the least element $\bot$ is $\emptyset$.

\subsection[Isomorphism to LTL Properties]{Isomorphism to Linear-time Temporal Properties}

\newcommand{\Prop}{\mathsf{Pr}}
\newcommand{\Definitives}{\mathsf{Df}}
\begin{thm}\label{thm:bijection}Define the lower adjoint $\Prop : \mathcal{D} \rightarrow \Pow(\itraces)$ as $\Prop(X) = X \cap \itraces$, and the upper adjoint $\Definitives : \Pow(\itraces) \rightarrow \mathcal{D}$ as $\Definitives(P) = \dclose P$. We have, for any definitive set $X$ and \LT temporal property $P$: $$ \Prop(X) = P\ \text{if and only if}\ X = \Definitives(P)$$\end{thm}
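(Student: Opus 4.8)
The plan is to recognise that the stated biconditional asserts precisely that $\Prop$ and $\Definitives$ are mutually inverse bijections, and so it suffices to establish the two round-trip identities $\Prop(\Definitives(P)) = P$ for every $P \subseteq \itraces$ and $\Definitives(\Prop(X)) = X$ for every $X \in \mathcal{D}$. Indeed, if both composites are the identity, then $\Prop(X) = P$ yields $X = \Definitives(\Prop(X)) = \Definitives(P)$ by applying $\Definitives$, and $X = \Definitives(P)$ yields $\Prop(X) = \Prop(\Definitives(P)) = P$ by applying $\Prop$; conversely the two identities are recovered by instantiating $P := \Prop(X)$ and $X := \Definitives(P)$. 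Along the way I would note that $\Definitives$ is well defined, i.e. that $\dclose P$ is genuinely definitive, which is immediate from the idempotence $\dclose\dclose X = \dclose X$ recorded above.

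The workhorse for both identities is the equivalent characterisation $\dclose X = \{ t \mid \extensions t \cap \itraces \subseteq X\}$, together with the observation that an infinite trace has a trivial extension set: if $t \in \itraces$ then $tu = t$ for all $u$, so $\extensions t = \{t\}$ and hence $\extensions t \cap \itraces = \{t\}$. For the first identity I would compute $\Prop(\Definitives(P)) = \dclose P \cap \itraces$ directly: for an infinite $t$, the membership $t \in \dclose P$ reduces via the characterisation to $\{t\} = \extensions t \cap \itraces \subseteq P$, i.e. to $t \in P$. Intersecting with $\itraces$ therefore yields exactly $P \cap \itraces = P$, using $P \subseteq \itraces$.

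For the second identity, fix a definitive $X$ and let $P = \Prop(X) = X \cap \itraces$. Unfolding gives $\Definitives(\Prop(X)) = \dclose(X \cap \itraces) = \{ t \mid \extensions t \cap \itraces \subseteq X \cap \itraces\}$. Since $\extensions t \cap \itraces$ is already a set of infinite traces, the condition $\extensions t \cap \itraces \subseteq X \cap \itraces$ is equivalent to $\extensions t \cap \itraces \subseteq X$, so the set on the right is precisely $\dclose X$, which equals $X$ because $X$ is definitive. This step is where definitiveness is essential: it is exactly the property guaranteeing that the finite traces of $X$ are reconstructible from its infinite traces, since a candidate $t$ (finite or infinite) lies in $X$ iff all its infinite extensions already lie in $X$.

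The obstacle here is conceptual rather than technical: one must keep careful track of finite versus infinite traces, noting that both $\dclose P$ and $\dclose(X \cap \itraces)$ may contain finite traces, and checking that these are correctly discarded by $\Prop$ in the first identity and correctly pinned down by definitiveness in the second, so that neither direction silently drops or invents finite prefixes. Once the two characterisations above are in hand, each identity is a short set-theoretic calculation; the only care needed is to justify the reduction $\extensions t \cap \itraces \subseteq X \cap \itraces \iff \extensions t \cap \itraces \subseteq X$ and the collapse $\extensions t = \{t\}$ for infinite $t$.
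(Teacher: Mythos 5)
Your proof is correct and takes essentially the same route as the paper: both reduce the biconditional to the two round-trip identities $\Prop(\Definitives(P)) = P$ and $\Definitives(\Prop(X)) = X$, verified from the paper's own facts that $\extensions t = \{t\}$ for infinite $t$ and that membership in $\dclose X$ depends only on infinite extensions. Your one-line computation $\dclose(X \cap \itraces) = \{t \mid \extensions t \cap \itraces \subseteq X\} = \dclose X = X$ is merely a tidier, more explicit rendering of the paper's informal argument that extension-closure lets the infinite traces of a definitive set determine all of its (finite) members.
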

\begin{proof} Proving each direction separately:
\begin{description}
	\item[$\implies$] It suffices to show that $\Definitives(\Prop(X)) = X$, i.e.\ $ \lightning(X \cap \itraces) = X$. Because any extension of a definitive prefix is also a definitive prefix, and all traces in $X$ are definitive prefixes, every finite trace in $X$ must be a prefix of some infinite trace in $X$. Thus, the infinite traces in $X$ alone are sufficient to describe all their definitive prefixes, i.e., all traces in $X$.
	\item[$\impliedby$] It suffices to show that $\Prop(\Definitives(P)) = P$, i.e.\ $\lightning P \cap \itraces = P$. Recall that the definitive prefixes of $P$ are all those traces $t$ for which all extensions of $t$ are a prefix of a trace in $P$. For an infinite trace $t \in P$, 
	the set of extensions $\extensions t$ is just $\{ t \}$, which is surely contained in $\prefixes P$. Therefore, for a \LT temporal property $P$ (consisting only of infinite traces), we can conclude $P \subseteq \lightning P$. In fact, $\lightning P$ consists of all the (infinite) traces of $P$ as well as possibly some finite prefixes of these. Hence $\lightning P \cap \itraces = P$.
\popQED
\end{description}	
\end{proof}
\begin{thm}
	$\Prop$ (and likewise for $\Definitives$) is monotone and preserves least upper and greatest lower bounds, i.e:
	\begin{itemize}
		\item If $A \subseteq B$ then $\Prop(A) \subseteq \Prop(B)$
		\item $\Prop(\bigcap_{i \in I} X_i) = \bigcap_{i \in I} \Prop(X_i)$
		\item $\Prop(\dbigunion_{i \in I} X_i) = \bigcup_{i \in I} \Prop(X_i)$
	\end{itemize}
\end{thm}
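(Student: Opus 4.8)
The plan is to prove the three properties for $\Prop$ directly---two of them are pure set-theoretic bookkeeping, and only the third carries any real content---and then to obtain the corresponding facts for $\Definitives$ for free from Theorem~\ref{thm:bijection}.

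First I would dispose of monotonicity. Since $\Prop(X) = X \cap \itraces$, the map is just intersection with the fixed set $\itraces$, which is monotone; hence $A \subseteq B$ immediately yields $\Prop(A) \subseteq \Prop(B)$. Preservation of greatest lower bounds is equally routine: the infimum in $\mathcal{D}$ is ordinary intersection, and since intersection with a fixed set distributes over arbitrary intersections we get $\Prop(\bigcap_{i} X_i) = (\bigcap_{i} X_i) \cap \itraces = \bigcap_{i} (X_i \cap \itraces) = \bigcap_{i} \Prop(X_i)$.

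The substance lies in the least-upper-bound case, where the supremum in $\mathcal{D}$ is the definitive union. Unfolding the definitions, the goal becomes $(\dclose \bigcup_{i} X_i) \cap \itraces = \bigcup_{i} (X_i \cap \itraces) = (\bigcup_{i} X_i) \cap \itraces$, so it suffices to establish the key lemma that $\dclose Y \cap \itraces = Y \cap \itraces$ for an arbitrary set of traces $Y$, and then instantiate $Y := \bigcup_{i} X_i$. This lemma is exactly the phenomenon already exploited in the $\impliedby$ direction of Theorem~\ref{thm:bijection}: for an infinite trace $t$ we have $\extensions t = \{t\}$, so $t \in \dclose Y$ collapses to $t \in \prefixes Y$, and because $t$ is infinite the only trace of which it is a prefix is $t$ itself; thus for infinite $t$, membership in $\dclose Y$ coincides with membership in $Y$. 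I expect this lemma to be the only genuinely delicate point; everything else is distributivity.

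Finally, rather than repeat the work for $\Definitives$, I would observe that Theorem~\ref{thm:bijection} exhibits $\Prop$ and $\Definitives$ as mutually inverse bijections between the complete lattices $\mathcal{D}$ and $\Pow(\itraces)$. Together with monotonicity---of $\Prop$ as just shown, and of $\Definitives = \dclose$, which is immediate since $P \subseteq Q$ gives $\prefixes P \subseteq \prefixes Q$ and hence $\dclose P \subseteq \dclose Q$---they form a pair of mutually inverse order isomorphisms. An order isomorphism between complete lattices automatically preserves all suprema and infima, so $\Definitives$ inherits monotonicity and preservation of both bounds with no further computation. Alternatively one can read the three facts off directly: monotonicity of $\dclose$, the distributivity-over-intersection identity already recorded for $\lightning$, and a one-line argument that $\dclose \bigcup_{i} P_i = \dclose \bigcup_{i} \dclose P_i$ using idempotence of $\dclose$.
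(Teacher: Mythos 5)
Your proposal is correct and takes essentially the same approach as the paper: the first two items are dispatched by unfolding definitions, and the least-upper-bound case rests on exactly the paper's key lemma that $\Prop(\lightning S) = \Prop(S)$ for any set of traces $S$ (your $\lightning S \cap \itraces = S \cap \itraces$), imported from the proof of Theorem~\ref{thm:bijection} and instantiated at $S = \bigcup_{i} X_i$. Your explicit transfer of the three properties to $\Definitives$ via the mutually inverse order isomorphisms is a welcome elaboration of the paper's terse ``and likewise for $\Definitives$'', but not a genuinely different route.
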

\begin{proof}
The first two statements follow directly from definitions. Preservation of least upper bounds requires more finesse. As we have already seen in the proof of Theorem~\ref{thm:bijection}, for any set of traces $S$, $\Prop(\lightning S) = \Prop(S)$. This means that $\Prop(\dbigunion_{i \in I} X_i) = \Prop(\bigcup_{i \in I} X_i) = \bigcup_{i \in I} \Prop(X_i)$ as required.
\end{proof}
\noindent These theorems say that $(\Prop, \Definitives)$ forms a \emph{lattice isomorphism} between definitive sets and \LT temporal properties.

\section[Semantics of LTL and LTL3]{Semantics of LTL and \ltlthree}
\label{sec:sems}

\subsection{Answer-indexed Families}

We give a semantics to \ltlthree by compositionally assigning to each formula
an \emph{answer-indexed family} of definitive sets. In general, an \emph{answer-indexed family} is a function that, given an answer (e.g.~a value in $\mathbb{B} = \{ \true, \false \}$), produces a set of models (depending on the logic, this could be a set of states, a definitive set, a linear-time temporal property, etc.). This set contains all those models which produce the given answer for the formula in question. In this way, we invert the traditional presentation of multi-valued logics, where the truth value is the output of a satisfaction function, and instead take the desired answer $a$ as an \emph{input} and produce models as an \emph{output}: $a = [\sigma \models \varphi]$ becomes $\sigma \in \semantics{\varphi}\ a$. An answer-indexed family for conventional, infinite-trace LTL therefore produces a linear-time temporal property as an output:
\[\Phi,\Psi \; \in \;  \mathbb{B} \rightarrow \Pow (\itraces)\]
whereas an answer-indexed family for \ltlthree produces a definitive set as an output:
\[\Phi,\Psi\; \in\; \mathbb{B} \rightarrow \mathcal{D}\]
To begin with, we define an alternative semantics for \emph{conventional} LTL in terms of answer-indexed families of linear-time temporal properties.  This requires us to define various operations on answer-indexed families, one for each kind of LTL constructor:
\begin{align*}
\Trueaif \ \true &= \itraces &(\Phi \Conjaif \Psi) \ \true &= \Phi \ \true \cap \Psi \ \true & (\Phi \Disjaif \Psi) \ \true &= \Phi \ \true \cup \Psi \ \true & (\Negaif \Phi) \ \true &= \Phi \ \false \\
\Trueaif \ \false &= \emptyset &(\Phi \Conjaif \Psi) \ \false &= \Phi \ \false \cup \Psi \ \false & (\Phi \Disjaif \Psi) \ \false &= \Phi \ \false \cap \Psi \ \false & (\Negaif \Phi) \ \false &= \Phi \ \true 
\end{align*}
Note that the set operations used for the $\false$ answer are always the duals of the operations used for the $\true$ answer, which means that for conventional LTL, the set produced for the $\false$ answer is always the complement of the set for the $\true$ answer. This means that the operator for negation ($\Negaif$) can simply swap the places of the set and its complement. This is akin to performing a conversion to negation normal form ``just-in-time'' as we evaluate a formula.

\noindent For atomic propositions $a$, the corresponding answer-indexed family maps $\true$ to the set of all traces that begin with a state containing $a$, and $\false$ to its complement:
\begin{align*}
\Propaif{\emph{a}} \ \true &= \{t \mid t \in \itraces \land \emph{a} \in t_0 \} \\
\Propaif{\emph{a}} \ \false &= \{t \mid t \in \itraces \land  \emph{a} \notin t_0\}
\end{align*}
The semantic operator for $\Next \varphi$ formulae prepends one state to all the corresponding traces for $\varphi$, analogously to the conventional LTL semantics in Figure~\ref{fig:ltlsemantics}:
\begin{align*}
(\Nextaif \Phi) \ \true &= \{t \mid  \drop{t}{1} \in \Phi \ \true \} \\
(\Nextaif \Phi) \ \false &= \{t \mid  \drop{t}{1} \in \Phi \ \false \}
\end{align*}
The $\true$ case of the semantic operator for $\varphi \Until \psi$ formulae is also defined analogously to Figure~\ref{fig:ltlsemantics}, with the $\false$ case being the complement:
\begin{align*}
(\Phi \Untilaif \Psi) \ \true &= \{t \mid \exists k. (\forall i < k. \drop{t}{i} \in \Phi \ \true) \wedge \drop{t}{k} \in \Psi \ \true \} \\
(\Phi \Untilaif \Psi) \ \false &= \{t \mid \forall k. (\exists i < k. \drop{t}{i} \in \Phi \ \false) \vee \drop{t}{k} \in \Psi \ \false \}
\end{align*}
Finally, we put all of these semantic operators to use in Figure~\ref{aifsemantics}, which gives a compositional, inductive semantics to conventional LTL using these operators.

\begin{figure}
\centering
\begin{align*}
\semantics{\top} \ &=\ \Trueaif \\
\semantics{\emph{a}} \ &=\ \Propaif{\emph{a}} \\
\semantics{\neg \varphi} \ &=\ \Negaif \semantics{\varphi}\\
\semantics{\varphi \wedge \psi} \ &=\ \semantics{\varphi} \Conjaif \semantics{\psi}\\
\semantics{\varphi \vee \psi} \ &=\ \semantics{\varphi} \Disjaif \semantics{\psi} \\
\semantics{\Next \varphi} \ &=\ \Nextaif \semantics{\varphi} \\
\semantics{\varphi \Until \psi} \ &=\ \semantics{\varphi} \Untilaif \semantics{\psi}
\end{align*}

\caption{LTL semantics using answer-indexed families}
\label{aifsemantics}
\end{figure}

\begin{thm}[Equivalence to conventional semantics]\label{thm:ltlequiv}
Answer-indexed family LTL semantics assigns the same truth values to a given trace for a given formula as conventional LTL semantics:
\begin{itemize}
\item $(t \vDash \varphi) \Longleftrightarrow (t \in \semantics{\varphi}\ \true)$
\item $\neg (t \vDash \varphi) \Longleftrightarrow (t \in \semantics{\varphi}\ \false)$
\end{itemize}
\end{thm}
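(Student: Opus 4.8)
The plan is to prove both biconditionals simultaneously by structural induction on $\varphi$, ranging over the primitive constructors $\top$, $a$, $\neg\varphi$, $\varphi\land\psi$, $\Next\varphi$ and $\varphi\Until\psi$; the derived connectives are mere abbreviations, and in particular the $\lor$ clause of Figure~\ref{aifsemantics} is dual to the $\land$ clause and handled identically (or simply unfolded to $\neg(\neg\varphi\land\neg\psi)$). Throughout, $t$ ranges over $\itraces$, matching the models of Figure~\ref{fig:ltlsemantics}. The two statements cannot be proved in sequence but must be carried together, because the negation case crosses between them: to establish $t\models\neg\varphi \Longleftrightarrow t\in\semantics{\neg\varphi}\ \true$ I unfold $\semantics{\neg\varphi}\ \true=\semantics{\varphi}\ \false$ and appeal to the induction hypothesis for the $\false$-statement of $\varphi$, namely $t\not\models\varphi \Longleftrightarrow t\in\semantics{\varphi}\ \false$; symmetrically the $\false$-statement for $\neg\varphi$ appeals to the $\true$-statement for $\varphi$.

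The base and propositional cases fall out of the definitions of the semantic operators. For $\top$, the sets $\itraces$ and $\emptyset$ match satisfaction and its negation on infinite traces; for $a$, the two clauses of $\Propaif{a}$ read off $a\in t_0$ and $a\notin t_0$ directly. Conjunction works because $\Conjaif$ applies $\cap$ on the $\true$ side and $\cup$ on the $\false$ side, mirroring the ``and'' of satisfaction and its De Morgan dual once the induction hypotheses for $\varphi$ and $\psi$ are in place. The $\Next$ case is equally direct, the only point to note being that $\drop{t}{1}\in\itraces$ whenever $t\in\itraces$, so that the induction hypothesis applies to the shifted trace.

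The main obstacle is the $\Until$ case, and within it the $\false$-direction. The $\true$-direction is routine: it identifies the existential witness $i$ of Figure~\ref{fig:ltlsemantics} with the index $k$ of $(\semantics{\varphi}\Untilaif\semantics{\psi})\ \true$ and rewrites each $\drop{t}{i}\models\psi$ and $\drop{t}{j}\models\varphi$ through the induction hypotheses, which is legitimate because every $\drop{t}{i}$ is again an infinite trace. For the $\false$-direction I would negate the satisfaction clause for $\varphi\Until\psi$ and push the negation inward, turning $\neg\exists i.\,(\drop{t}{i}\models\psi \land \forall j<i.\ \drop{t}{j}\models\varphi)$ into $\forall k.\,(\drop{t}{k}\not\models\psi \lor \exists i<k.\ \drop{t}{i}\not\models\varphi)$; applying the $\false$-induction hypotheses to $\psi$ and $\varphi$ then yields exactly the defining condition of $(\semantics{\varphi}\Untilaif\semantics{\psi})\ \false$. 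The only real work is the bookkeeping of the alternating quantifiers, together with checking that this classical De Morgan transformation of the quantifier block matches the form deliberately chosen for the $\false$ clause of $\Untilaif$; everything else is unfolding of definitions, which is why the Isabelle/HOL mechanisation dispatches these cases with essentially the definitions plus the induction hypotheses.
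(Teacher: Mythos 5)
Your proof is correct and follows exactly the route the paper takes: the paper's own proof is a one-line ``induction on $\varphi$, justified in the same way as a conversion to negation normal form,'' which is precisely your simultaneous induction where the negation case swaps the $\true$- and $\false$-statements and the $\false$ clauses are handled as De Morgan duals. Your elaboration of the $\Until$ quantifier bookkeeping and the remark that $\drop{t}{1} \in \itraces$ are just the details the paper leaves implicit.
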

\begin{proof}
This is proven straightforwardly by induction on $\varphi$, justified in the same way as a conversion to negation normal form.
\end{proof}

\subsection{The Prepend Operation}
To give a semantics to \ltlthree, our answer-indexed families will produce definitive sets, rather than linear-time temporal properties. To this end, we will define an auxiliary operation on definitive sets called \emph{prepend}, written $\prepend X$, which gives all traces whose tails are in $X$:
$$\prepend X\; \triangleq \; \{ t \mid \drop{t}{1} \in X \}$$

\begin{thm}\label{thm:prependclosed}
The \emph{prepend} operation is closed for definitive sets. That is, if $X$ is definitive, then $\prepend X$ is definitive.
\end{thm}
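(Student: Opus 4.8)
The plan is to show directly that $\prepend X$ satisfies the fixpoint equation $\dclose(\prepend X) = \prepend X$, using the characterisation of definitiveness through extensions. Since $\dclose(\prepend X) \subseteq \prefixes(\prepend X)$ and extensions of prefixes always give us one inclusion for free, the real content is to prove $\dclose(\prepend X) \subseteq \prepend X$; that is, any trace $t$ all of whose infinite extensions lie in $\prepend X$ must itself lie in $\prepend X$. By the equivalent characterisation stated in the excerpt, I may work with $\dclose Y = \{ t \mid \extensions t \cap \itraces \subseteq Y\}$, which tends to be more convenient than reasoning about $\prefixes$.

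**First I would** handle the empty-trace corner case separately, since $\prepend X$ is defined via $\drop{t}{1}$ and $\drop{\varepsilon}{1} = \varepsilon$. For $t = \varepsilon$, note that $\drop{(su)}{1} = u$ for any single state $s$ and any trace $u$, so the infinite extensions of $\varepsilon$ are exactly traces $su$ with $u$ infinite, and their tails range over all of $\itraces$. Hence $\varepsilon \in \dclose(\prepend X)$ forces $\itraces \subseteq X$, which (since $X$ is definitive, so $X = \dclose X$ and $\dclose$ is determined by infinite traces) forces $X = \traces$ and therefore $\prepend X = \traces \ni \varepsilon$. The main argument is for nonempty $t = s\,t'$: **then I would** exploit the key observation that prepending commutes with taking tails, namely that the infinite extensions of $s\,t'$ are precisely $\{ s\,u \mid u \in \extensions{t'} \cap \itraces\}$, and each such extension lies in $\prepend X$ iff its tail $u$ lies in $X$. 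So $s\,t' \in \dclose(\prepend X)$ says exactly that $\extensions{t'} \cap \itraces \subseteq X$, i.e. $t' \in \dclose X$.

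**At this point** the definitiveness hypothesis $X = \dclose X$ does the work: $t' \in \dclose X = X$, and therefore $\drop{(s t')}{1} = t' \in X$, which is the defining condition for $s\,t' \in \prepend X$. This closes the nontrivial inclusion, and combined with the routine inclusion gives $\dclose(\prepend X) = \prepend X$, so $\prepend X$ is definitive.

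**The hard part will be** getting the bookkeeping around $\drop{\cdot}{1}$, $\extensions{\cdot}$ and the empty trace exactly right, rather than any conceptual difficulty; the essential content is just that $\prepend$ is, state by state, a relabelling that shifts the ``definitive prefix'' condition from $X$ to $\prepend X$. I would double-check the degenerate case where $\Sigma = \emptyset$ (so there are no states to prepend and $\prepend X$ collapses) and confirm the argument still goes through vacuously, since the Isabelle development presumably treats this uniformly. I expect the slickest presentation to avoid case-splitting altogether by proving the single identity $\dclose(\prepend X) = \prepend(\dclose X)$ for arbitrary $X$ and then instantiating with $X = \dclose X$; if that identity holds cleanly it subsumes both the empty and nonempty cases and makes the closure property immediate.
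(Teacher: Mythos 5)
Your argument is correct and follows essentially the same route as the paper's proof: a case split on $t = \varepsilon$ versus $t = \sigma u$, transporting the definitive-prefix condition through the tail, with the only real difference being that you work throughout with the infinite-extension characterisation $\dclose Y = \{ t \mid \extensions t \cap \itraces \subseteq Y \}$ where the paper manipulates prefix sets via the identity $\prefixes (\prepend X) = \prepend (\prefixes X)$. One imprecision worth flagging: the inclusion $\prepend X \subseteq \dclose (\prepend X)$ is \emph{not} free --- for an arbitrary set $Y$ one does not have $Y \subseteq \dclose Y$ (take $Y = \{\varepsilon\}$) --- it holds here only because $X$ is definitive, hence closed under extensions, which makes $\prepend X$ extension-closed as well; this is exactly the paper's $\impliedby$ direction, which needs its own $t = \varepsilon$ subcase ($\varepsilon \in \prepend X$ forces $\varepsilon \in X$, hence $X = \traces$), so you should not wave it through in one clause. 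Two further remarks: your worry about $\Sigma = \emptyset$ is moot, since $\Sigma = \Pow(A)$ always contains the state $\emptyset$ --- and indeed the infinite-extension characterisation you rely on is only equivalent to the prefix-based definition because every finite trace has an infinite extension; and your proposed strengthening $\dclose (\prepend X) = \prepend (\dclose X)$ for arbitrary $X$ does hold under that same nonemptiness assumption, and would, as you say, subsume the case analysis and yield the theorem by instantiating $X = \dclose X$.
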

\begin{proof}
We must show that $\lightning (\prepend X) = \prepend X$ for any definitive set $X$. Showing each direction separately:
\begin{description}
	\item[$\implies$] Given a definitive prefix $t \in \lightning(\prepend X)$, we must show that $t \in \prepend X$. If $t = \varepsilon$, then this implies that $\prepend X = \traces$ and therefore $t \in \prepend X$. If $t = \sigma{}u$, because $\prefixes (\prepend X) = \prepend (\prefixes X)$, we can conclude \linebreak $\extensions \sigma{}u \subseteq \prepend (\downarrow X) $. Taking the tail of both sides, we can see that $\extensions u \subseteq {\prefixes X}$ and therefore $u \in X$ as $X$ is definitive. Prepending $\sigma$ to both sides, we conclude that $\sigma{}u \in \prepend X$ as required.
	\item[$\impliedby$] Given a prefix $t \in \prepend X$, we must show that $t \in \lightning (\prepend X)$. If $t = \epsilon$, this means that $X = \prepend X = \lightning (\prepend X) = \traces$ as $X$ is definitive. If $t = \sigma{}u$, we know that $u \in X$. As $X$ is definitive, all extensions of $u$ are also in $X$. Therefore $\prepend (\extensions u) \subseteq X$ and thus $\sigma{}u \in \lightning (\prepend X)$.
\popQED
\end{description}
\end{proof}

\subsection[Semantics for LTL3]{Semantics for \ltlthree}
The semantic operators for \ltlthree resemble that of conventional LTL, except that now we work with definitive sets rather than linear-time temporal properties. 
\begin{align*}
\Trueaif _3 \ \true &= \traces &(\Phi \Conjaif_3 \Psi) \ \true &= \Phi \ \true \cap \Psi \ \true & (\Phi \Disjaif_3 \Psi) \ \true &= \Phi \ \true \dunion \Psi \ \true & (\Negaif _3 \Phi) \ \true &= \Phi \ \false \\
\Trueaif _3 \ \false &= \emptyset &(\Phi \Conjaif_3 \Psi) \ \false &= \Phi \ \false \dunion \Psi \ \false & (\Phi \Disjaif_3 \Psi) \ \false &= \Phi \ \false \cap \Psi \ \false & (\Negaif _3 \Phi) \ \false &= \Phi \ \true
\end{align*}
All of the sets produced by these answer-indexed families are definitive, as $\traces$ and $\emptyset$ are both definitive sets and definitive sets are closed under intersection and definitive union. Unlike with conventional LTL, the set for the $\false$ answer is not the complement of the set for the $\true$ answer, as definitive sets are not closed under complement. The set for $\true$ contains all traces that are sufficient to definitively satisfy the formula, and the set for $\false$ contains all traces that are sufficient to definitively refute the formula.

For an atomic proposition $a$, the set for $\true$ contains all non-empty traces that begin with a state that satisfies $a$, and the set for $\false$ contains all non-empty traces that begin with a state that does not satisfy $a$. However, if $a$ is trivial, in the sense that \emph{all} or \emph{no} possible states satisfy $a$, then these sets are not definitive, as the excluded empty trace $\varepsilon$ would also be definitive for these sets. Thus, we take the definitive prefixes of these sets to account for this possibility:%
\begin{align*}
\Propaifthree{\emph{a}} \ \true &= \lightning\{t \mid t \neq \epsilon \wedge  \emph{a} \in t_0 \} \\
\Propaifthree{\emph{a}} \ \false &= \lightning\{t \mid t \neq \epsilon \wedge  \emph{a} \notin t_0\} \end{align*}
For the $\Next$ operator, we make use of the prepend operator, which by Theorem~\ref{thm:prependclosed} produces definitive sets:
\begin{align*}
(\Nextaif _3 \Phi) \ \true &= \prepend \ (\Phi \ \true) \\
(\Nextaif _3 \Phi) \ \false &= \prepend \ (\Phi \ \false)
\end{align*}
For the $\Until$ operator, we construct our semantics iteratively, building up by repeatedly prepending states. Here the notation $f^k$ indicates the self-composition of $f$ $k$ times, i.e.~$f^0(x) = x$ and $f^{k+1}(x) = f^k(f(x))$:
\begin{align*}
(\Phi \Untilaif _3 \Psi) \ \true &= \bigdunion_{k \in \mathbb{N}} f^k(\Psi \ \true), \ \text{where} \ f(X) = \prepend X \cap \Phi \ \true\\
(\Phi \Untilaif _3 \Psi) \ \false &= \bigcap_{k \in \mathbb{N}} f^k(\Psi \ \false), \ \text{where} \ f(X) = \prepend X \dunion \Phi \ \false
\end{align*}
Because definitive sets are closed under intersection, definitive union and the prepend operator, we can see that that our $\Untilaif$ operator also produces definitive sets by a simple inductive argument on the natural number $k$.  Using all of these operations, we construct an inductive, compositional semantics for \ltlthree  in Figure~\ref{ltl3semantics}.  
\begin{figure}
\centering
\begin{align*}
\semantics{\top} _3 \ &=\ \Trueaif _3 \\
\semantics{\emph{a}} _3 \ &=\ \Propaifthree{\emph{a}} \\
\semantics{\neg \varphi} _3 \ &=\ \Negaif_3\ \semantics{\varphi} _3\\
\semantics{\varphi \wedge \psi} _3 \ &=\ \semantics{\varphi} _3  \Conjaif _3  \semantics{\psi} _3\\
\semantics{\varphi \vee \psi} _3 \ &=\ \semantics{\varphi} _3  \Disjaif _3  \semantics{\psi} _3 \\
\semantics{\Next \varphi} _3 \ &=\ \Nextaif_3\ \semantics{\varphi} _3 \\
\semantics{\varphi \Until \psi} _3 \ &=\ \semantics{\varphi}_3 \Untilaif_3 \semantics{\psi}_3 
\end{align*}

\caption{\ltlthree semantics using answer-indexed families}
\label{ltl3semantics}
\end{figure}
\begin{thm}[Equivalence to original \ltlthree definition]\label{thm:ltl3old}
	Let $t$ be a finite prefix and $\varphi$ be an LTL formula. Then:
\begin{itemize}
	\item $t \in \semantics{\varphi}_3\ \true \iff \forall u \in \itraces.\ tu \in \semantics{\varphi}\ \true$
	\item $t \in \semantics{\varphi}_3\ \false \iff \forall u \in \itraces.\ tu \in \semantics{\varphi}\ \false$
\end{itemize}
\end{thm}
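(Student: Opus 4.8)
The plan is to factor the theorem through a single unifying lemma relating the two semantics: for every formula $\varphi$ and every answer $a \in \{\true,\false\}$,
\[ \semantics{\varphi}_3\ a \;=\; \dclose\bigl(\semantics{\varphi}\ a\bigr). \]
Granting this, the theorem is immediate. For a finite prefix $t$ we have $t \in \semantics{\varphi}_3\ \true$ iff $t \in \dclose(\semantics{\varphi}\ \true)$, which by the characterisation $\dclose X = \{t \mid \extensions t \cap \itraces \subseteq X\}$ unfolds to $\extensions t \cap \itraces \subseteq \semantics{\varphi}\ \true$; and since $t$ is finite, $\extensions t \cap \itraces = \{tu \mid u \in \itraces\}$, so this is exactly $\forall u \in \itraces.\ tu \in \semantics{\varphi}\ \true$. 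The $\false$ case is identical. Thus all the content of the theorem lies in the displayed lemma, which I would prove by induction on $\varphi$, treating the $\true$ and $\false$ answers simultaneously (negation merely swaps the two answers).

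Before the induction I would assemble a small toolkit of closure properties. The crucial facts are: (i) $\dclose$ sees only the infinite traces of its argument, $\dclose X = \dclose(X \cap \itraces)$, which is immediate from $\dclose X = \{t \mid \extensions t \cap \itraces \subseteq X\}$; and (ii) $\dclose X$ and $X$ have the same infinite traces, $\dclose X \cap \itraces = X \cap \itraces$, as established inside the proof of Theorem~\ref{thm:bijection}. From (i) and (ii) one obtains $\dclose A \dunion \dclose B = \dclose(A \cup B)$ and, more generally, $\dclose \bigcup_k \dclose A_k = \dclose \bigcup_k A_k$. Together with the distributivity $\dclose(\bigcap_i X_i) = \bigcap_i \dclose X_i$ recorded earlier, and the commutation $\prepend(\dclose X) = \dclose(\prepend X)$ (which I would verify by a direct case split on whether the trace is empty, in the style of Theorem~\ref{thm:prependclosed}), these dispatch every non-recursive case. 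For example the $\Next$ case reads $\semantics{\Next \varphi}_3\ \true = \prepend(\semantics{\varphi}_3\ \true) = \prepend(\dclose(\semantics{\varphi}\ \true)) = \dclose(\prepend(\semantics{\varphi}\ \true)) = \dclose(\semantics{\Next \varphi}\ \true)$, using the induction hypothesis and prepend-commutation; the atomic, conjunction and disjunction cases follow the same pattern, with the trivial-proposition corner for atoms absorbed automatically by fact (i).

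The main obstacle is the $\Until$ case, where both semantics are iterations rather than closed forms. I would first rewrite the conventional semantics in matching iterative shape: $\semantics{\varphi \Until \psi}\ \true = \bigcup_k g^k(\semantics{\psi}\ \true)$ with $g(Y) = \prepend Y \cap \semantics{\varphi}\ \true$, the unfolding $g^k(Y) = \{t \mid \drop{t}{k} \in Y \wedge \forall i<k.\ \drop{t}{i} \in \semantics{\varphi}\ \true\}$ being an easy induction on $k$; dually $\semantics{\varphi \Until \psi}\ \false = \bigcap_k g^k(\semantics{\psi}\ \false)$ with $g(Y) = \prepend Y \cup \semantics{\varphi}\ \false$. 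The key step is then a conjugation identity between the \ltlthree step function $f$ and the conventional one $g$: using the induction hypothesis on $\varphi$ together with prepend-commutation and distributivity, one shows $f(\dclose Y) = \dclose(g(Y))$ in both the $\true$ case (where $f(X) = \prepend X \cap \dclose(\semantics{\varphi}\ \true)$) and the $\false$ case (where $f(X) = \prepend X \dunion \dclose(\semantics{\varphi}\ \false)$, this time invoking $\dclose A \dunion \dclose B = \dclose(A \cup B)$). A routine induction on $k$ then promotes this to $f^k(\dclose Y) = \dclose(g^k(Y))$.

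Finally I would commute $\dclose$ past the infinitary fold. For the $\true$ answer, unfolding the definition and applying the induction hypothesis, $\semantics{\varphi \Until \psi}_3\ \true = \bigdunion_k f^k(\dclose(\semantics{\psi}\ \true)) = \bigdunion_k \dclose(g^k(\semantics{\psi}\ \true)) = \dclose \bigcup_k \dclose(g^k(\semantics{\psi}\ \true)) = \dclose \bigcup_k g^k(\semantics{\psi}\ \true) = \dclose(\semantics{\varphi \Until \psi}\ \true)$, where the penultimate equality is precisely the infinite-trace interchange $\dclose \bigcup_k \dclose A_k = \dclose \bigcup_k A_k$ from the toolkit. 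The $\false$ answer runs the same computation but replaces the definitive union by the intersection $\bigcap_k$, using the distributivity of $\dclose$ over arbitrary intersections. The two points I expect to demand the most care are the prepend-commutation lemma --- because of the empty-trace corner, where $\prepend X = \traces$ forces $X = \traces$ --- and this last interchange of $\dclose$ with the countable definitive union and intersection, where one must justify replacing each $\dclose(g^k(\cdots))$ by $g^k(\cdots)$ under the outer closure.
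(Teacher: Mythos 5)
Your proposal is correct, but it packages the argument differently from the paper. The paper treats this theorem as a one-line corollary: having already established (while defining the semantic operators) that $\semantics{\varphi}_3\ \true$ and $\semantics{\varphi}_3\ \false$ are definitive sets, membership of a finite $t$ unfolds by definitiveness to the quantification over all infinite extensions, and the remaining identification of the infinite traces of $\semantics{\varphi}_3\ a$ with $\semantics{\varphi}\ a$ is supplied by Theorem~\ref{thm:ltl3equiv} ($\Prop(\semantics{\varphi}_3\ a) = \semantics{\varphi}\ a$), stated immediately afterwards and proved by induction on $\varphi$ --- a dependency the paper's terse proof leaves implicit. Your single unifying lemma $\semantics{\varphi}_3\ a = \dclose(\semantics{\varphi}\ a)$ is interderivable with that pair of facts via the isomorphism of Theorem~\ref{thm:bijection}: $X$ is definitive with $\Prop(X) = P$ exactly when $X = \Definitives(P) = \dclose P$. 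Unsurprisingly, then, your induction corresponds step-by-step to the paper's proof of Theorem~\ref{thm:ltl3equiv}: your facts (i) and (ii) play the role of the paper's $\Prop(\dclose S) = \Prop(S)$; your commutation $\prepend(\dclose X) = \dclose(\prepend X)$ mirrors the commutation of $\prepend$ with $\Prop$; your interchange of $\dclose$ with unions and intersections mirrors preservation of least upper and greatest lower bounds; and your until case performs the same reshaping of the conventional semantics into $\bigcup_k g^k(\cdot)$ and the same auxiliary induction on $k$ (your conjugation $f^k(\dclose Y) = \dclose(g^k(Y))$ is the $\Definitives$-side image of the paper's $\Prop$-side computation). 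What each route buys: the paper's factoring through the adjoint pair $(\Prop,\Definitives)$ reuses the lattice machinery once and makes both this theorem and Theorem~\ref{thm:em} cheap corollaries of a single equivalence theorem, whereas your version is more self-contained and makes explicit the inductive content that the paper's stated proof of this particular theorem silently borrows; on the other hand, you re-prove on the $\dclose$ side some facts (closure of the operators, the $\dunion$ identities) that the paper already has, and you could shorten your toolkit by invoking the paper's prior observation that every operator output is definitive, leaving only the infinite-trace half of your lemma to establish by induction.
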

\begin{proof}
	Follows directly from the definition of definitive sets, as $\semantics{\varphi}_3\ \true$ and $\semantics{\varphi}_3\ \false$ are both definitive.
\end{proof}
\noindent Theorem~\ref{thm:ltl3old} shows that our inductive semantics coincides with the original non-inductive semantics given for \ltlthree. If we view our semantics through the lens of the isomorphism in Theorem~\ref{thm:bijection}, however, we see that this semantics is also equivalent to the semantics of conventional LTL:
\begin{thm}[Equivalence to conventional LTL]\label{thm:ltl3equiv}
For all formulae $\varphi$:
\begin{itemize}
\item $\Prop (\semantics{\varphi} _3\ \true) = \semantics{\varphi}\ \true$
\item $\Prop (\semantics{\varphi} _3\ \false) = \semantics{\varphi}\ \false$
\end{itemize}
\end{thm}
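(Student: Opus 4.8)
The plan is to proceed by structural induction on $\varphi$, proving both bullets simultaneously (negation forces this, since it swaps $\true$ and $\false$), and to discharge each case by pushing $\Prop$ through the corresponding \ltlthree semantic operator. The engine of the argument is the established fact that $\Prop$ is monotone and preserves greatest lower bounds (intersections) and least upper bounds (carrying definitive unions to ordinary unions), stated for arbitrary index sets. Two auxiliary facts will do most of the remaining work. The first, already used in the proof of Theorem~\ref{thm:bijection}, is that $\Prop(\dclose S) = \Prop(S)$ for \emph{any} set of traces $S$, so that the definitive closures appearing in the \ltlthree atom and until clauses simply vanish under $\Prop$. The second is a \emph{prepend lemma}: $\Prop(\prepend X) = \{\, t \in \itraces \mid \drop{t}{1} \in \Prop(X)\,\}$, which holds because dropping the first state of an infinite trace again yields an infinite trace, so membership in $X$ and in $\Prop(X) = X \cap \itraces$ coincide for such tails.

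With these in hand the base and simple inductive cases are routine. For $\top$ we compute $\Prop(\traces) = \itraces$ and $\Prop(\emptyset) = \emptyset$ directly. For an atom $a$ the definitive closure disappears by the first auxiliary fact, and the remaining $\{t \mid t \neq \varepsilon \wedge a \in t_0\} \cap \itraces$ is exactly $\semantics{a}\,\true$ (and dually for $\false$). Negation is immediate from the induction hypothesis after swapping answers. Conjunction and disjunction follow by combining the induction hypothesis with preservation of intersections (on the meet side) and the passage from definitive union to ordinary union (on the join side); the $\false$ clauses use the dual operations, but the reasoning is identical. The $\Next$ case is precisely the prepend lemma together with the induction hypothesis.

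The main obstacle is the $\Until$ case, where the \ltlthree semantics is an infinitary definitive union (resp.\ intersection) of iterates of a prepend-based operator while the conventional semantics is a quantified formula. I would first prove a \emph{commutation lemma}: writing $\Phi = \semantics{\varphi}_3$, $\Psi = \semantics{\psi}_3$, and letting $f$ be the \ltlthree step $f(X) = \prepend X \cap \Phi\,\true$ and $g$ the conventional step $g(Y) = \{t \in \itraces \mid \drop{t}{1} \in Y\} \cap \semantics{\varphi}\,\true$, the prepend lemma, preservation of intersections, and the induction hypothesis $\Prop(\Phi\,\true) = \semantics{\varphi}\,\true$ give $\Prop(f(X)) = g(\Prop(X))$. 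A short induction on $k$ then lifts this to $\Prop(f^k(\Psi\,\true)) = g^k(\semantics{\psi}\,\true)$. Interchanging $\Prop$ with the definitive union (again because $\Prop$ carries definitive unions to unions) yields $\Prop((\Phi \Untilaif_3 \Psi)\,\true) = \bigcup_k g^k(\semantics{\psi}\,\true)$.

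It then remains to identify this union with the conventional until set. I would show, by a second induction on $k$ that repeatedly unfolds $\drop{t}{1}$, that $t \in g^k(\semantics{\psi}\,\true)$ holds exactly when $t \in \itraces$, $\drop{t}{k} \in \semantics{\psi}\,\true$, and $\drop{t}{i} \in \semantics{\varphi}\,\true$ for every $i < k$; taking the union over $k$ reproduces the existential characterisation of $(\semantics{\varphi} \Untilaif \semantics{\psi})\,\true$ verbatim. The $\false$ clause is entirely dual: the step becomes $f(X) = \prepend X \dunion \Phi\,\false$ with conventional counterpart $h(Y) = \{t \in \itraces \mid \drop{t}{1} \in Y\} \cup \semantics{\varphi}\,\false$, $\Prop$ is interchanged with the intersection rather than the union, and the analogous membership characterisation of $h^k$ matches the universal form of the $\false$ until clause. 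The only points demanding genuine care are the interchange of $\Prop$ with the infinitary meet and join, licensed precisely because the preservation theorem is stated for arbitrary index sets, and keeping the fixed definitive sets $\Phi\,\true$ and $\Phi\,\false$ distinct from their images $\semantics{\varphi}\,\true$ and $\semantics{\varphi}\,\false$, which the induction hypothesis ties together.
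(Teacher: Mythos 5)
Your proposal is correct and follows essentially the same route as the paper's own proof: structural induction on $\varphi$, using $\Prop(\dclose S)=\Prop(S)$ for the atom case, preservation of (arbitrary-index) least upper and greatest lower bounds for the connectives, commutation of $\Prop$ with $\prepend$ for $\Next$, and in the $\Until$ case the same two inductions on $k$ --- first to push $\Prop$ through the iterates $f^k$, then to identify $\bigcup_k g^k$ (resp.\ $\bigcap_k h^k$) with the conventional quantified characterisation. Your explicit statements of the prepend lemma and the step-commutation lemma $\Prop(f(X))=g(\Prop(X))$ are merely a more detailed articulation of what the paper leaves implicit.
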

\begin{proof}
The two statements are shown simultaneously by induction on $\varphi$:
\begin{itemize}[leftmargin=7em]
	\item[$\varphi = \top$:]	$\Prop(\traces) = \itraces$ by definition.
	\item[$\varphi = a$:]  Because $\Prop(\lightning S) = \Prop(S)$ as seen in the proof of Theorem~\ref{thm:bijection}, $\Prop(\Propaifthree{a} \true) = \Propaif{a}\ \true$ and likewise $\Prop(\Propaifthree{a} \false) = \Propaif{a}\ \false$. 
	\item[$\varphi = \neg \varphi'$:] Follows from inductive hypotheses. 
	\item[$\varphi = \varphi' \land \psi'$:] Follows from inductive hypotheses as $\Prop$ preserves greatest lower  and least upper bounds.
	\item[$\varphi = \Next \varphi'$:] Follows from inductive hypotheses as the prepend operator $\prepend$ commutes with $\Prop$.
	\item[$\varphi = \varphi' \Until \psi'$:] Because $\Prop$ commutes with $\prepend$ and preserves least upper and greatest lower bounds,\linebreak we can show that $\Prop(\bigdunion_{k \in \mathbb{N}} f^k(\semantics{\psi'}_3 \ \true))$, where $f(X) = \prepend X \cap \semantics{\varphi}_3 \ \true$, is equal to\linebreak $\bigcup_{k \in \mathbb{N}} g^k(\Prop(\semantics{\psi'}_3 \ \true))$ where  $g(X) = \prepend X \cap \Prop(\semantics{\varphi'}_3 \ \true)$ by induction on on the natural number $k$. By the inductive hypotheses, this is equal to $\bigcup_{k \in \mathbb{N}} g^k(\semantics{\psi'} \ \true)$ where  $g(X) = \prepend X \cap \semantics{\varphi'} \ \true$.   This can be shown by another simple induction to be equal to the original definition in the conventional LTL semantics $\{t \mid \exists k. (\forall i < k. \drop{t}{i} \in \semantics{\varphi'} \ \true) \wedge \drop{t}{k} \in \semantics{\psi'} \ \true \}$. The cases for the $\false$ answer are proved similarly.
\popQED
\end{itemize}
\end{proof}
\noindent Because of this equivalence theorem, we can now express the relationship between the set for $\true$ and the set for $\false$ in our \ltlthree semantics. In \ltlthree, while the two sets do not overlap, they are not perfect complements of each other as they were in conventional LTL, as definitive sets are not closed under complement. Instead, the $\false$ set is the definitive set corresponding to the \emph{linear-time temporal property} containing all infinite traces not in the $\true$ set.
\begin{thm}[Excluded Middle]\label{thm:em}
For all formulae $\varphi$: $$ \semantics{\varphi}_3\ \true = \lightning(\itraces \setminus \semantics{\varphi}_3\ \false) \qquad \qquad \text{and} \qquad \qquad \semantics{\varphi}_3\ \false = \lightning(\itraces \setminus \semantics{\varphi}_3\ \true) $$	
\end{thm}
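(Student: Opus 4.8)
The plan is to exploit the isomorphism of Theorem~\ref{thm:bijection} to reduce each equation to a statement about infinite traces, where ordinary set complement is well-behaved. First I would observe that both sides of each equation are definitive sets: the left-hand sides are definitive by construction (as already noted when the \ltlthree operators were defined), and the right-hand sides are definitive because $\lightning(\cdot)$ always yields a definitive set. Since Theorem~\ref{thm:bijection} makes $\Prop$ a bijection on $\mathcal{D}$, and in particular injective, it suffices to show that $\Prop$ applied to each side produces the same \LT temporal property.

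Concentrating on $\semantics{\varphi}_3\ \true = \lightning(\itraces \setminus \semantics{\varphi}_3\ \false)$, I would apply $\Prop = (\cdot) \cap \itraces$ to the left-hand side and invoke Theorem~\ref{thm:ltl3equiv} to get $\Prop(\semantics{\varphi}_3\ \true) = \semantics{\varphi}\ \true$. For the right-hand side, note that $\itraces \setminus \semantics{\varphi}_3\ \false$ is itself an \LT temporal property (a set of infinite traces), so the round-trip $\Prop(\Definitives(P)) = P$ from Theorem~\ref{thm:bijection} gives $\Prop(\lightning(\itraces \setminus \semantics{\varphi}_3\ \false)) = \itraces \setminus \semantics{\varphi}_3\ \false$. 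It then remains to identify this with $\semantics{\varphi}\ \true$: restricting to infinite traces, $\itraces \setminus \semantics{\varphi}_3\ \false = \itraces \setminus (\semantics{\varphi}_3\ \false \cap \itraces) = \itraces \setminus \semantics{\varphi}\ \false$ by Theorem~\ref{thm:ltl3equiv}, and this equals $\semantics{\varphi}\ \true$ because in \emph{conventional} LTL the $\false$ set is the genuine complement of the $\true$ set within $\itraces$, a direct consequence of Theorem~\ref{thm:ltlequiv}. Hence $\Prop$ of both sides equals $\semantics{\varphi}\ \true$, and injectivity of $\Prop$ closes the argument. The second equation follows symmetrically, swapping the roles of $\true$ and $\false$.

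The crux of the argument, and the reason the statement is not simply ``$\semantics{\varphi}_3\ \true = \traces \setminus \semantics{\varphi}_3\ \false$'', is that definitive sets are not closed under complement, so excluded middle can hold only after discarding the finite traces of $\semantics{\varphi}_3\ \false$ (via $\itraces \setminus (\cdot)$) and then regenerating the appropriate finite definitive prefixes with $\lightning$. I expect the main obstacle to be handling this interplay cleanly, namely verifying that the finite traces of $\semantics{\varphi}_3\ \false$ genuinely play no role; this is exactly what the passage through $\Prop$ together with the round-trip $\Prop \circ \Definitives = \mathrm{id}$ on \LT properties guarantees. Everything else reduces to the complementation property of conventional LTL and the commutation results already established.
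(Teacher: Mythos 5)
Your proof is correct and takes essentially the same route as the paper's: both reduce the claim to conventional LTL via Theorem~\ref{thm:ltl3equiv}, use the complementation fact $\semantics{\varphi}\ \true = \itraces \setminus \semantics{\varphi}\ \false$ derived from Theorem~\ref{thm:ltlequiv}, and transport the result back through the isomorphism of Theorem~\ref{thm:bijection}. The only cosmetic difference is that you apply $\Prop$ to both sides and conclude by injectivity, whereas the paper writes the identical argument as a single chain of equalities starting from $\semantics{\varphi}_3\ \true = \Definitives(\Prop(\semantics{\varphi}_3\ \true))$ --- injectivity of $\Prop$ on $\mathcal{D}$ and the round-trip $\Definitives \circ \Prop = \mathrm{id}$ being the same content.
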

\begin{proof} It is a straightforward consequence of Theorem~\ref{thm:ltlequiv} that $\semantics{\varphi}\ \true = \itraces \setminus \semantics{\varphi}\ \false$ (*). Then:
\begin{displaymath}
	\begin{array}{lclr}
		 \semantics{\varphi}_3\ \true & =& \Definitives(\Prop ( \semantics{\varphi}_3\ \true)) & \text{(Theorem~\ref{thm:bijection})}\\
         & =& \Definitives(\semantics{\varphi}\ \true) & \text{(Theorem~\ref{thm:ltl3equiv})}\\
         & =& \Definitives(\itraces \setminus \semantics{\varphi}\ \false) & \text{(*)}\\	 
         & =& \Definitives(\itraces \setminus \Prop (\semantics{\varphi}_3\ \false)) & \text{(Theorem~\ref{thm:ltl3equiv})}\\
         & =& \Definitives(\itraces \setminus (\semantics{\varphi}_3\ \false \cap \itraces)) &\qquad \text{(Definition of $\Prop$)}\\         
         & =& \lightning (\itraces \setminus \semantics{\varphi}_3\ \false) & \text{ (Definition of $\Definitives$)}\\[-4ex]
	\end{array}
\end{displaymath}	
\end{proof}

\section{Formula Progression}
\label{sec:fp}
\begin{figure}
\begin{gather*}
\boxed{\varphi \xrightarrow{\sigma} \psi}\\
\inferrule{ }{\top \xrightarrow{\sigma} \top} \qquad 
\inferrule{a \in \sigma}{a \xrightarrow{\sigma} \top} \qquad 
\inferrule{a \notin \sigma}{a \xrightarrow{\sigma} \bot} \qquad 
 \inferrule{ }{\Next \varphi \xrightarrow{\sigma} \varphi} \\
 \inferrule{ \varphi \xrightarrow{\sigma} \varphi' }{\neg \varphi \xrightarrow{\sigma} \neg \varphi'}\qquad 
\inferrule{ \varphi \xrightarrow{\sigma} \varphi' \\
              \psi \xrightarrow{\sigma} \psi' }{\varphi \land \psi \xrightarrow{\sigma} \varphi' \land \psi'} \qquad 
\inferrule{ \psi \xrightarrow{\sigma} \psi' \\ \varphi \xrightarrow{\sigma} \varphi'}{\varphi \Until \psi \xrightarrow{\sigma} \psi' \lor (\varphi' \land (\varphi \Until \psi)) }
\end{gather*}
\caption{Rules for formula progression}
\label{fig:fp}	
\end{figure}

\emph{Formula progression} is a technique first introduced by Kabanza et al.~\cite{progress1,progress2} that evaluates a formula stepwise against states in a style reminiscent of operational semantics or the Brzozowski derivative. This technique was used by O'Connor and Wickstr\"om~\cite{quickstrom} as the basis for their testing algorithm, and by Bauer and Falcone~\cite{fpltl3fm} for decentralised monitoring of component-based systems.  Figure~\ref{fig:fp} gives an overview of formula progression rules for LTL. The judgement \plat{$\varphi \xrightarrow{\sigma} \psi$} states that, to prove $\varphi$, it suffices to prove $\psi$ for the tail of our trace if the head of our trace is $\sigma \in \Sigma$. Note that these rules are total and syntax-directed on the left-hand formula $\varphi$. This means that these rules taken together constitute a definition of a total function that takes $\varphi$ and $\sigma$ as input and produces $\psi$ as output. We generalise this notation to finite prefixes, so that for a finite trace $t = \sigma_0\dots\sigma_n$, the notation $\varphi \xrightarrow{t} \psi$ just means $\varphi \xrightarrow{\sigma_0} \cdots \xrightarrow{\sigma_n} \psi$. Repeated application of these rules, however, can lead to exponential blowup in the size of the formula. While both O'Connor and Wickstr\"om~\cite{quickstrom} and Bauer and Falcone~\cite{fpltl3fm} report that interleaving this progression with formula simplification at each step keeps the formulae tractable for most practical use cases, Ro\c{s}u and Havelund \cite{rosu} warn that pathological exponential cases still exist.

Bauer and Falcone~\cite{fpltl3fm} state that formula progression can serve as an \emph{alternative semantics} for \ltlthree on finite traces, where a formula $\varphi$ is considered definitively true for a finite trace $t$ iff $\varphi \xrightarrow{t} \top$, definitively false iff $\varphi \xrightarrow{t} \bot$, and is unknown otherwise. While it goes unmentioned in their paper, here the implicit simplification steps are not just a performance optimisation, but are vital to ensure that the semantics given via formula progression is complete with respect to the standard \ltlthree semantics. To see why, consider the formula $\Eventually a$. Let $\sigma_a$ be a state where $a \in \sigma_a$. Then the formula $\Eventually a$ should be considered definitively true for the trace consisting of just $\sigma_a$. The formula generated by our formula progression rules, however, would be $\top \lor (\top \land \Eventually a)$, which yields the desired formula $\top$ only after logical simplifications are applied. While in this case, the simplifications required are just identities of propositional logic, in general such straightforward simplifications alone are insufficient. For example, consider the formula $(\Next a) \lor (\Eventually \neg a)$. According to the semantics of \ltlthree presented above, this formula should be considered definitively true for the empty trace $\epsilon$, as it is a tautology. Temporally local simplifications such as those used by O'Connor and Wickstr\"om~\cite{quickstrom}, however, would not be able to determine that this formula is a tautology until after one state has been observed. Therefore, in order for formula progression to align correctly with the semantics of \ltlthree, the simplification must transform \emph{all} tautologies into $\top$ and \emph{all} absurdities into $\bot$. A simple, although slow way to implement such a simplifier would be to convert both the formula and its negation into B\"uchi automata, and perform cycle detection to check for emptiness. For our development, we abstract away from such syntactic simplification procedures by working only on the level of our model-based semantics. As can be seen in our Theorem~\ref{thm:fp3} given below, we do not seek a specific syntactic tautology $\top$ or absurdity $\bot$, but rather refer to any formula with trivial semantics. A purely syntactic characterisation, by contrast, would require a full accounting of the simplification procedure, which is outside the scope of our development here.

The rules given in Figure~\ref{fig:fp} operate on one state at at a time, whereas our semantics are on the level of entire traces. Therefore, in order to show soundness and completeness (for finite traces) of our formula progression rules with respect to our semantics, we must first prove two lemmas which relate a single step of formula progression to our semantics.

The first lemma states that for one step of formula progression $\varphi \xrightarrow{\sigma} \varphi'$, prepending $\sigma$ to the traces that satisfy/refute the \emph{output} formula $\varphi'$ yields traces that satisfy (resp.\ refute) the \emph{input} formula $\varphi$.
  
 \begin{thm}\label{thm:fp1} Let $\varphi$ and $\varphi'$ be formulae and $\sigma$ be a state such that $\varphi \xrightarrow{\sigma} \varphi'$. Then:
 \begin{itemize}\item
 	$ \prepend (\semantics{\varphi'}_3\ \true) \cap \{ t \mid t_0 = \sigma \} \subseteq \semantics{\varphi}_3\ \true  $
 	\item 
 	 	$ \prepend (\semantics{\varphi'}_3\ \false) \cap \{ t \mid t_0 = \sigma \} \subseteq \semantics{\varphi}_3\ \false  $
 \end{itemize}
 \end{thm}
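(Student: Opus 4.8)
The plan is to reduce the statement to a single-step correctness property of formula progression for \emph{conventional} LTL, and then transport that property across the isomorphism of Theorem~\ref{thm:bijection}. This keeps the delicate reasoning in the simpler world of infinite-trace satisfaction and lets the already-proven equivalences (Theorems~\ref{thm:ltlequiv} and~\ref{thm:ltl3equiv}) do the bookkeeping between $\semantics{\cdot}$ and $\semantics{\cdot}_3$.

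First I would record an auxiliary single-step lemma: if $\varphi \xrightarrow{\sigma} \varphi'$, then for every infinite trace $s \in \itraces$ we have $\sigma s \models \varphi$ iff $s \models \varphi'$. This is proved by induction on $\varphi$ (equivalently, rule induction on the derivation of $\varphi \xrightarrow{\sigma} \varphi'$, which is legitimate since the progression rules are total and syntax-directed on $\varphi$). The cases $\top$, $a$, $\Next \varphi'$, $\neg$, and $\land$ are immediate from Figure~\ref{fig:ltlsemantics} and the inductive hypotheses; for instance $\sigma s \models \neg\varphi$ iff $\sigma s \nmodels \varphi$ iff $s \nmodels \varphi'$ iff $s \models \neg\varphi'$. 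The only delicate case is $\varphi \Until \psi$, whose progressed form is $\psi' \lor (\varphi' \land (\varphi \Until \psi))$: here I would unfold the existential witness $i$ in the $\Until$ clause, split on whether $i = 0$ (which uses the hypothesis for $\psi \xrightarrow{\sigma} \psi'$) or $i \geq 1$ (which uses the hypothesis for $\varphi \xrightarrow{\sigma} \varphi'$ at the head, and shifts the witness down to $i-1$), and reassemble in the converse direction by shifting up. The recurring subformula $\varphi \Until \psi$ in the output is handled by directly re-expanding its semantics at the tail $s$, so no induction hypothesis for it is needed and no circularity arises.

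Next I would establish the bridge to the \ltlthree semantics. By Theorem~\ref{thm:ltl3equiv}, $\Prop(\semantics{\chi}_3\ \true) = \semantics{\chi}\ \true$, and since $\semantics{\chi}_3\ \true$ is definitive, Theorem~\ref{thm:bijection} gives $\semantics{\chi}_3\ \true = \Definitives(\Prop(\semantics{\chi}_3\ \true)) = \dclose(\semantics{\chi}\ \true)$, and dually for $\false$. Combining the infinite-extension characterisation $\dclose X = \{ t \mid \extensions t \cap \itraces \subseteq X \}$ with Theorem~\ref{thm:ltlequiv} yields a uniform membership test, valid for finite \emph{and} infinite $t$: we have $t \in \semantics{\chi}_3\ \true$ iff $tu \models \chi$ for every $u \in \itraces$, and $t \in \semantics{\chi}_3\ \false$ iff $tu \nmodels \chi$ for every $u \in \itraces$. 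I would then assemble the first bullet by taking $t \in \prepend(\semantics{\varphi'}_3\ \true) \cap \{ t \mid t_0 = \sigma \}$, so that $t_0 = \sigma$ and $\drop{t}{1} \in \semantics{\varphi'}_3\ \true$; by the bridge every infinite extension $(\drop{t}{1})u \models \varphi'$. To conclude $t \in \semantics{\varphi}_3\ \true$ it suffices, again by the bridge, to check $tu \models \varphi$ for each $u \in \itraces$; since $t_0 = \sigma$ we may rewrite $tu = \sigma\,((\drop{t}{1})u)$ with $(\drop{t}{1})u \in \itraces$, and the single-step lemma applied to $s = (\drop{t}{1})u$ converts $(\drop{t}{1})u \models \varphi'$ into $tu \models \varphi$. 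The $\false$ bullet is identical, with $\models$ replaced by $\nmodels$ and the other direction of the single-step biconditional.

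The main obstacle is the $\Until$ case of the single-step lemma: getting the witness-index shift exactly right when passing between $\sigma s$ and $s$, and confirming that the recursive occurrence of $\varphi \Until \psi$ in the progressed formula is discharged by re-expanding semantics rather than by an unavailable induction hypothesis. A secondary point requiring care is that $t$ ranges over both finite and infinite traces, so the bridge must be the uniform $\dclose$-based characterisation above rather than the finite-prefix form of Theorem~\ref{thm:ltl3old}; once that is in place, the rest is routine rewriting and the reassociation $tu = \sigma\,((\drop{t}{1})u)$.
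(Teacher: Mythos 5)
Your proposal is correct, but it takes a genuinely different route from the paper's. The paper proves Theorem~\ref{thm:fp1} directly by structural induction on $\varphi$ \emph{inside} the definitive-set semantics: the base cases and $\Next$ follow from definitions, the conjunction and disjunction cases use the distributivity of the lattice of definitive sets together with the fact that $\prepend$ distributes over intersection and definitive union, and the $\Until$ case unfolds the big union and intersection in the definition of $\Untilaif\!_3$ by one step. You instead prove a single-step progression lemma for conventional infinite-trace LTL ($\sigma s \models \varphi \iff s \models \varphi'$ for $s \in \itraces$) and transport it across the isomorphism using Theorems~\ref{thm:ltlequiv}, \ref{thm:ltl3equiv} and~\ref{thm:bijection}, via the uniform characterisation $t \in \semantics{\chi}_3\ \true \iff \forall u \in \itraces.\ tu \models \chi$ --- and you correctly note that Theorem~\ref{thm:ltl3old} alone would not suffice here, since $t$ ranges over infinite traces as well. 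Your route buys real simplifications: the delicate $\Until$ index-shifting happens in classical two-valued semantics where no lattice bookkeeping is needed; and because your single-step lemma is a biconditional, the identical argument also yields Theorem~\ref{thm:fp2}, in effect establishing the equality $\semantics{\varphi}_3\ \true \cap \{ t \mid t_0 = \sigma \} = \prepend(\semantics{\varphi'}_3\ \true) \cap \{ t \mid t_0 = \sigma \}$ (and likewise for $\false$), merging the paper's two lemmas into one. Your treatment of the recursive occurrence of $\varphi \Until \psi$ in the progressed formula --- discharged by re-expanding the satisfaction clause at the tail rather than by an unavailable induction hypothesis --- is exactly right and avoids the one circularity trap in the argument. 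The cost of your approach is the dependency on the full equivalence Theorem~\ref{thm:ltl3equiv}: the paper's direct induction is self-contained within the \ltlthree semantics and exercises the definitive-set lattice machinery on its own terms, which matters if one wants the \ltlthree development (or its Isabelle/HOL mechanisation) to stand independently of the conventional-LTL theory rather than to lean on the isomorphism for every result about formula progression.
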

\begin{proof}
	The two statements are shown simultaneously by structural induction on the formula $\varphi$ (which, as our rules are syntax directed, uniquely determines the output formula $\varphi'$). The base cases for $\varphi = \top$ and $\varphi = a$ as well as the inductive cases for the next operator $\Next$ follow directly from definitions. Of the other inductive cases, the cases for conjunction and disjunction require the use of the distributive properties of the lattice of definitive sets, as well as the fact that the prepend operator $\prepend$ distributes over intersection and definitive union. The cases for negation follow directly from the inductive hypotheses, whereas the cases for the until operator $\Until$ require unfolding of the big unions and intersections in the definition of the semantic operator $\Untilaif\!_3$ by one step.
\end{proof}
\noindent The second lemma states that those traces that satisfy the \emph{input} formula $\varphi$ and begin with the state $\sigma$ will have tails that satisfy the \emph{output} formula $\varphi'$.
\begin{thm}\label{thm:fp2}  Let $\varphi$ and $\varphi'$ be formulae and $\sigma$ be a state such that $\varphi \xrightarrow{\sigma} \varphi'$. Then:
 \begin{itemize}\item
 	$ \semantics{\varphi}_3\ \true\ \cap \{ t \mid t_0 = \sigma \} \subseteq \prepend (\semantics{\varphi'}_3\ \true)  $
 	\item 
 	 	$ \semantics{\varphi}_3\ \false\ \cap \{ t \mid t_0 = \sigma \} \subseteq \prepend (\semantics{\varphi'}_3\ \false)  $
 \end{itemize}
\end{thm}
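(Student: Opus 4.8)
The plan is to avoid re-running the structural induction used for Theorem~\ref{thm:fp1}, and instead derive Theorem~\ref{thm:fp2} from Theorem~\ref{thm:fp1} together with the isomorphism machinery of Section~\ref{sec:sems}. The guiding observation is that all the relevant sets are either definitive (the $\semantics{\cdot}_3$ sets, and $\prepend(\semantics{\varphi'}_3\ \true)$ by Theorem~\ref{thm:prependclosed}) or the intersection of a definitive set with $\{t \mid t_0 = \sigma\}$, and that a definitive set is completely determined by its infinite traces. I would therefore first settle each inclusion on infinite traces, where the $\true$ and $\false$ sets are genuine complements, and only afterwards lift the result to finite traces.

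For the infinite core of the $\true$ inclusion I would argue contrapositively. Let $t \in \semantics{\varphi}_3\ \true$ be infinite with $t_0 = \sigma$, and suppose $t \notin \prepend(\semantics{\varphi'}_3\ \true)$, i.e.\ $\drop{t}{1} \notin \semantics{\varphi'}_3\ \true$. Since $\drop{t}{1}$ is infinite, membership in a definitive set agrees with membership in its infinite-trace part, so by Theorem~\ref{thm:ltl3equiv} and fact $(*)$ from the proof of Theorem~\ref{thm:em} (which makes $\semantics{\varphi'}\ \true$ and $\semantics{\varphi'}\ \false$ complementary within $\itraces$) we obtain $\drop{t}{1} \in \semantics{\varphi'}_3\ \false$, hence $t \in \prepend(\semantics{\varphi'}_3\ \false)$. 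As $t_0 = \sigma$, the $\false$ half of Theorem~\ref{thm:fp1} then places $t \in \semantics{\varphi}_3\ \false$; but an infinite $t$ cannot lie in both $\semantics{\varphi}_3\ \true$ and $\semantics{\varphi}_3\ \false$, since these have disjoint infinite-trace parts by $(*)$ — a contradiction. The $\false$ inclusion is symmetric, using the $\true$ half of Theorem~\ref{thm:fp1}.

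To lift to finite traces I would use that $\prepend(\semantics{\varphi'}_3\ \true)$ is definitive (Theorem~\ref{thm:prependclosed}) together with the characterisation $\dclose X = \{t \mid \extensions t \cap \itraces \subseteq X\}$, so that a trace lies in a definitive set exactly when all of its infinite extensions do. Given a finite $t \in \semantics{\varphi}_3\ \true$ with $t_0 = \sigma$, every infinite extension $tu$ again lies in $\semantics{\varphi}_3\ \true$ (extensions of definitive prefixes are definitive prefixes) and still begins with $\sigma$; by the infinite case each such $tu$ lies in $\prepend(\semantics{\varphi'}_3\ \true)$, and since this set is definitive, $t$ does too. The same lifting applies to the $\false$ inclusion.

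The main subtlety, and the step I would be most careful about, is precisely this interplay between finite and infinite traces: the complementarity that powers the contrapositive holds only on $\itraces$ (definitive sets are not closed under complement), so the argument must be phrased on infinite traces first and only then transported back through the definitive closure. As an alternative I would record the direct route mirroring the proof of Theorem~\ref{thm:fp1}: a simultaneous structural induction on $\varphi$ (which determines $\varphi'$), with $\top$, $a$, $\Next$ and $\neg$ immediate, and the laborious cases being those whose output set is a definitive union $A \dunion B = \dclose(A \cup B)$ — conjunction on the $\false$ side and until on both sides. There the obstacle is that a trace $\sigma u \in \dclose(A \cup B)$ need not lie in $A \cup B$, so one must unfold the closure: take an arbitrary extension $\sigma u v$, find the witnessing trace $w$ of $A\cup B$ of which it is a prefix, apply the induction hypothesis to $w$ (whose head is $\sigma$), and thereby exhibit $uv$ as a prefix of the corresponding output union, concluding $u \in \dclose(A' \cup B')$; the until cases additionally require unfolding the fixed-point definition of the until semantics by one step.
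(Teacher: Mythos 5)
Your proposal is correct, but it takes a genuinely different route from the paper's. The paper proves Theorem~\ref{thm:fp2} by a second structural induction on $\varphi$, parallel in structure to the proof of Theorem~\ref{thm:fp1}: base cases and $\Next$ by unfolding definitions, $\land$/$\lor$ via distributivity of $\prepend$ over intersection and definitive union, negation from the induction hypotheses, and $\Until$ by unfolding $\Untilaif_3$ one step --- exactly the ``alternative'' you sketch in your closing paragraph. Your primary argument instead \emph{derives} Theorem~\ref{thm:fp2} from Theorem~\ref{thm:fp1}, using Theorem~\ref{thm:ltl3equiv} to reduce to infinite traces where $\semantics{\varphi'}\ \true$ and $\semantics{\varphi'}\ \false$ are complementary in $\itraces$, and then lifting back through definitive closure via Theorem~\ref{thm:prependclosed} and the characterisation $\dclose X = \{t \mid \extensions t \cap \itraces \subseteq X\}$. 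I checked the contrapositive step and the lifting step, and both are sound; there is no circularity, since Theorems~\ref{thm:fp1} and~\ref{thm:ltl3equiv} are established independently of Theorem~\ref{thm:fp2}. What your route buys: it avoids redoing the laborious definitive-union cases (conjunction on the $\false$ side, disjunction on the $\true$ side, and $\Untilaif_3$) and conceptually exposes Theorem~\ref{thm:fp2} as a consequence of Theorem~\ref{thm:fp1} plus excluded middle on $\itraces$ --- the familiar trick of obtaining the converse inclusion from complementation, as for deterministic automata. What the paper's route buys: it is self-contained within Section~\ref{sec:fp} and mechanises as a near copy of the Theorem~\ref{thm:fp1} script, whereas yours leans on the full isomorphism machinery of Section~\ref{sec:sems}. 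One small point to tighten: in your lifting step you assert that every infinite extension of $t$ ``still begins with $\sigma$'', which tacitly assumes $t \neq \varepsilon$; this is harmless under the paper's reading of $t_0$ (the empty trace has no first state, so $\varepsilon \notin \{t \mid t_0 = \sigma\}$), and even the degenerate case is recoverable in your framework, since $\varepsilon \in \semantics{\varphi}_3\ \true$ forces $\semantics{\varphi}_3\ \true = \traces$, and your infinite case applied to all traces $\sigma v$ with $v \in \itraces$ then gives $\itraces \subseteq \semantics{\varphi'}_3\ \true$, whence $\semantics{\varphi'}_3\ \true = \traces$ by definitiveness.
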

\begin{proof}
As with Theorem~\ref{thm:fp1}, the two statements are shown simultaneously by structural induction on the formula $\varphi$. The base cases and the cases for the next operator $\Next$ are shown just by unfolding definitions, the cases for conjunction and disjunction are shown by use of distributive properties including those of the prepend operator $\prepend$, negation proceeds directly from the induction hypotheses, and the until operator $\Until$ requires unfolding of the semantic operator $\Untilaif\!_3$ by one step.
\end{proof}
\noindent By combining these two lemmas, we can inductively prove a theorem that relates formula progression to our semantics on the level of entire finite traces. This resembles the informal definition of formula progression semantics given by Bauer and Falcone~\cite{fpltl3fm}, but with the syntactic requirement that the ultimate formula be $\top$ or $\bot$ replaced by a semantic requirement that it has trivial semantics.
\begin{thm}\label{thm:fp3} Let $t \in \ftraces$ be a finite trace. Then, for all formulae $\varphi$ and $\varphi'$ where $\varphi \xrightarrow{t} \varphi'$:
\begin{itemize}
\item  $t \in \semantics{\varphi}_3\ \true$ if and only if $\semantics{\varphi'}_3\ \true = \traces$.
\item  $t \in \semantics{\varphi}_3\ \false$ if and only if $\semantics{\varphi'}_3\ \false = \traces$.	
\end{itemize}	
\end{thm}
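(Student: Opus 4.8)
The plan is to prove both statements simultaneously by induction on the length of the finite trace $t$, exploiting the fact that the progression rules define a total, deterministic function, so that $\varphi'$ is uniquely determined by $\varphi$ and $t$ and the universal quantification over $\varphi'$ is benign. I will only spell out the $\true$ case, as the $\false$ case is entirely symmetric (it uses Theorems~\ref{thm:fp1} and~\ref{thm:fp2} in their $\false$ forms). The observation that drives the whole argument is that every set $\semantics{\psi}_3\ \true$ is definitive, and that for a definitive set $X$ one has $\varepsilon \in X$ if and only if $X = \traces$: the forward direction holds because any extension of a definitive prefix is again a definitive prefix, so $\varepsilon \in X$ forces $\extensions \varepsilon = \traces \subseteq X$, while the converse is immediate.

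For the base case $t = \varepsilon$, the empty progression gives $\varphi' = \varphi$, and the required equivalence $\varepsilon \in \semantics{\varphi}_3\ \true \iff \semantics{\varphi}_3\ \true = \traces$ is exactly the definitive-set fact just noted. For the inductive step I would write $t = \sigma u$ with $\sigma$ a single state and $u$ a shorter finite trace, and decompose the progression as $\varphi \xrightarrow{\sigma} \varphi'' \xrightarrow{u} \varphi'$ for the unique intermediate formula $\varphi''$. The heart of the step is the single-state equivalence
\[ \sigma u \in \semantics{\varphi}_3\ \true \iff u \in \semantics{\varphi''}_3\ \true, \]
which I obtain from the two single-step lemmas: since $(\sigma u)_0 = \sigma$ and, by the definition of $\prepend$, $\sigma u \in \prepend X \iff u \in X$, Theorem~\ref{thm:fp2} supplies the forward implication and Theorem~\ref{thm:fp1} the backward one. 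Composing this equivalence with the induction hypothesis applied to $u$ and $\varphi''$ (which asserts $u \in \semantics{\varphi''}_3\ \true \iff \semantics{\varphi'}_3\ \true = \traces$) yields the desired $\sigma u \in \semantics{\varphi}_3\ \true \iff \semantics{\varphi'}_3\ \true = \traces$.

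I do not anticipate a serious obstacle, since the hard work has already been packaged into Theorems~\ref{thm:fp1} and~\ref{thm:fp2}, which bridge a single progression step and the $\prepend$ operation. The only points that require care are structural: one must peel states off the \emph{front} of the trace (so that the $\prepend$-based lemmas line up, whereas peeling off the back would misalign them with the $\drop{\cdot}{1}$ in the definition of $\prepend$), and one must apply the induction hypothesis at the intermediate formula $\varphi''$ rather than at $\varphi$. The latter is legitimate precisely because the statement is universally quantified over formulae, so the hypothesis is available for $\varphi''$ at the strictly shorter trace $u$.
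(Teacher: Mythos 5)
Your proposal is correct and follows essentially the same route as the paper's own proof: induction on the length of $t$ with the formulae kept arbitrary, the definitive-set fact that $\varepsilon \in X \iff X = \traces$ for the base case, and the front-decomposition $t = \sigma u$ with Theorem~\ref{thm:fp2} and Theorem~\ref{thm:fp1} supplying the two directions of the single-step equivalence before composing with the induction hypothesis at the intermediate formula. The only differences are notational (your $\varphi''$ is the paper's intermediate $\varphi$), so nothing further is needed.
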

\begin{proof}
By induction on the length of the trace $t$ (where $\varphi$ and $\varphi'$ are kept arbitrary). The second statement for $\false$ is proved identically to the first for $\true$, so we present the proof only for $\true$ here.
	\begin{description}
	\item[Base Case $(t = \varepsilon)$] It suffices to show that $\varepsilon \in \semantics{\varphi}_3\ \true$ iff $\semantics{\varphi}_3\ \true = \traces$. Because $\semantics{\varphi}_3\ \true$ is a definitive set, and any extension of a definitive prefix is also a definitive prefix, as $\varepsilon$ is in $\semantics{\varphi}_3\ \true$, we can conclude that all traces (i.e.\ extensions of $\varepsilon$) are in $\semantics{\varphi}_3\ \true$. The reverse direction of the iff is straightforward.
	\item[Inductive Case $(t = \sigma u)$] We know that $\varphi_0 \xrightarrow{\sigma} \varphi \xrightarrow{u} \varphi'$ and have the inductive hypothesis that $u \in \semantics{\varphi}_3\ \true\linebreak[2] \iff \semantics{\varphi'}_3\ \true = \traces$. We must show that $\sigma u \in \semantics{\varphi_0}_3\ \true \iff \semantics{\varphi'}_3\ \true = \traces$. Therefore, by the inductive hypothesis, it suffices to show $\sigma u \in \semantics{\varphi_0}_3\ \true \iff u \in \semantics{\varphi}_3\ \true$. Showing each direction separately:
	\begin{description}
	\item[$\implies$] By Theorem~\ref{thm:fp2} we can conclude that $\sigma u \in \prepend (\semantics{\phi}_3\ \true)$ and thus that $u \in \semantics{\varphi}_3\ \true$ by the definition of the prepend operator $\prepend$. 
	\item[$\impliedby$] By the definition of the prepend operator $\prepend$ we can conclude that $\sigma u \in \prepend(\semantics{\varphi}_3\ \true)$ and thus that $\sigma u \in \semantics{\varphi_0}_3\ \true$ by Theorem~\ref{thm:fp1}. 
\popQED
	\end{description}
\end{description}	
\end{proof}
\noindent Theorem~\ref{thm:fp3} is both a \emph{soundness} and \emph{completeness} proof for formula progression semantics with respect to our model-based semantics, up to finite traces. Soundness here means that a formula $\varphi$ will only evaluate in formula progression to a tautology for a trace $t$ when $t$ is in $\semantics{\varphi}_3\ \true$, and likewise will only evaluate to an absurdity when $t$ is in $t$ is in $\semantics{\varphi}_3\ \false$. This is the $\impliedby\!\!$ direction of the iff in Theorem~\ref{thm:fp3}. Completeness (or adequacy) up to finite traces means that all finite prefixes that definitively confirm the formula will evaluate in formula progression to a tautology, and all finite prefixes that definitively refute the formula will evaluate to an absurdity. This is the $\implies\!\!$ direction of the iff in Theorem~\ref{thm:fp3}.
\section{Discussion}
\label{sec:disc}

\subsection{Prefix Characterisations and Monitorability}
\label{sec:goodbad}

Kupferman and Vardi~\cite{kupfermanvardi} define the \emph{bad prefixes} of a property $P \subseteq \itraces$ as those finite prefixes that cannot be extended to a trace that is in $P$, and further define \emph{good prefixes} as those for whom all infinite extensions are in $P$. For any \LT temporal property $P$, we can see from our definitions that $\lightning P$ consists of $P$ along with all good prefixes of $P$. The bad prefixes of $P$ can be obtained by taking the definitive prefixes of the complement of $P$, i.e.\ $\lightning (\itraces \setminus P)$. Our answer-indexed families $\mathbb{B} \rightarrow \mathcal{D}$ can be thought of as tracking both the \emph{good} and \emph{bad} prefixes of a property simultaneously, along with the infinite traces that they approximate. That is, for a formula $\varphi$, $\semantics{\varphi}_3\ \true$ contains all infinite traces that satisfy $\varphi$ as well as the good prefixes of $\varphi$, and $\semantics{\varphi}_3\ \false$ contains all infinite traces that do not satisfy $\varphi$ as well as the bad prefixes of $\varphi$.

Bauer et al.~\cite{bauercomparing} further define \emph{ugly prefixes} as those that cannot be finitely extended into good nor bad prefixes.
 Note that the good, bad, and ugly prefixes do not constitute a complete classification of all finite prefixes. For example, the prefix $\textit{ppp}\dots$ is not in $\semantics{ p \Until q }_3\ \true$ nor $\semantics{ p \Until q }_3\ \false$, but it it not ugly either, as it can be extended with $q$ giving a good prefix, or with $\emptyset$ giving a bad prefix. Here, $\emptyset$ is the state satisfying neither $p$ nor $q$. The presence of ugly prefixes means that the formula is \emph{non-monitorable}.
 
Aceto et al.~\cite{aceto2019adventures} define monitorability positively, through a framework for synthesising monitors from modal $\mu$-calculus formulae. The semantics of these monitors resembles our formula progression semantics, and thus it may be interesting to find some connection (such as bisimilarity) between these. 
Aceto et al.~define monitorable formulae as those for which a monitor can be synthesised --- we conjecture that this definition and that of Bauer et al.~\cite{bauercomparing} coincide. They also define syntactic fragments of modal $\mu$-calculus that are monitorable for acceptance and violation, which is a useful syntactic accounting of monitorability that may be transferable to \ltlthree. Like us, Aceto et al.~give their semantics of modal $\mu$-calculus in terms of sets of traces, including sets of both finite and infinite traces (`finfinite' traces) --- they do not, however, consider definitive prefixes, and as such their finfinite semantics does not align with our \ltlthree semantics.
 
\subsection{Safety and Liveness}
Linear-time temporal properties can be broadly categorised into \emph{safety} properties, which state that something ``bad'' does not happen during execution, and \emph{liveness} properties, which state that something ``good'' will eventually happen during execution~\cite{Lam77}. Alpern and Schneider~\cite{alpernschneider} provide a formal characterisation by equipping $\itraces$ with a metric space structure, where the distance between two traces is measured inversely to the length of their longest common prefix.  Then, safety properties are those sets that are limit-closed (i.e.\ $\overline{P} = P$) and liveness properties are those sets that are dense (i.e.\ $\overline{P} = \itraces$). The key insight that enables this elegant characterisation is that a safety property can always be definitively refuted by a finite prefix of a trace, whereas any finite prefix can be extended in such a way as to satisfy a given liveness property. We also see in later work~\cite{kupfermanvardi,quant} the concept of \emph{co-safety} (or \emph{guarantee}) properties and \emph{co-liveness}  (or \emph{morbidity}) properties, the complements of safety and liveness properties respectively. A \emph{co-safety} property can always be definitively \emph{confirmed} by a finite prefix of a trace,\footnote{Thus, \emph{co-safety} could be seen as an alternative formalisation of Lamport's informal concept of a liveness property~\cite{Lam77}, different from the standard formalisation of \cite{alpernschneider}.} whereas any finite prefix can be extended in such a way as to \emph{refute} a given \emph{co-liveness} property.

Our definitive sets include those finite prefixes that can confirm (or refute) the property, enabling us to express these insights about finite prefixes directly. This provides an alternative characterisation that we conjecture is equivalent to that of Alpern and Schneider~\cite{alpernschneider}. 
\begin{description}
\item[Liveness Properties] 	Liveness properties are those that can never be definitively refuted by a finite prefix. Thus a definitive set $X$ represents a liveness property iff all finite traces are prefixes of traces in $X$, i.e. $\ftraces \subseteq \prefixes X$. A co-liveness property can never be definitively confirmed by a finite prefix. As we saw in Theorem~\ref{thm:em}, the complement of a definitive set $X$ is given by $\lightning (\itraces \setminus X)$. This gives us a characterisation of co-liveness, where $X$ represents a co-liveness property iff $\ftraces \subseteq \prefixes (\itraces \setminus X)$.
\item[Safety Properties] Safety properties are those that can always be definitively refuted by a finite prefix, but because our definitive sets include definitive confirmations and not refutations, it is easier to begin with co-safety properties, which can always be definitively \emph{confirmed} by a finite prefix.  That is, any infinite trace in the property must be an extension of some \emph{finite} definitive prefix of the property. Thus, a definitive set $X$ represents a co-safety property iff $ X = \extensions(X \cap \ftraces)$. A safety property is just the complement of a co-safety property, i.e.\ $X$ is a safety property iff $ \lightning (\itraces \setminus X) = \extensions(\lightning (\itraces \setminus X) \cap \ftraces) $.
\end{description}

\subsection{RV-LTL}

\ltlthree only gives \emph{definitive} non-$\unknown$ answers, that is, a formula is judged to be true (resp.\ false) for a finite trace $t$ only if all extensions of that prefix $t$ are also true (resp.\ false). As noted by Bauer et al.~\cite{bauercomparing}, this means that there exists a large class of formulae for which no definitive answers can be given for any finite trace. For example, take the standard \emph{request/acknowledge} format:
$$
\Always (r \Rightarrow \Eventually a)
$$%
which states that all requests ($r$) must eventually be acknowledged ($a$). For every finite prefix $u$, we have $ur^\omega \in \semantics{\varphi}_3\ \false $ and $ua^\omega \in \semantics{\varphi}_3\ \true$. As the $\false$ and $\true$ answers are non-overlapping (Theorem~\ref{thm:em}), $u$ must not be a definitive prefix. Therefore, \emph{all} finite prefixes are not definitive, meaning that \ltlthree cannot give a non-$\unknown$ answer for any finite trace. 
To remedy this, Bauer et al.~\cite{bauergbu, bauercomparing} propose RV-LTL, a dialect of LTL specifically for the domain of runtime verification. RV-LTL is more accurately an ad-hoc layering of \ltlthree on top of Pnueli's LTL for finite traces (here notated $\models_\textsf{F}$). Where \ltlthree would give the $\unknown$ answer, RV-LTL instead gives a \emph{presumptive} answer ($\top^\textsf{p}$ or $\bot^\textsf{p}$ ) based on the answer obtained from Pnueli's finite LTL:
\[
[u \models \varphi]_\textsf{RV} = \begin{cases}
 	\top & \text{if}\ [ u \models \varphi]_3 = \top \\
 	\bot & \text{if}\ [ u \models \varphi]_3 = \bot \\
 	\top^\textsf{p} & \text{if}\ [ u \models \varphi]_3 = \texttt{?}\ \text{and}\ u \models_\textsf{F} \varphi \\
 	\bot^\textsf{p} & \text{if}\ [ u \models \varphi]_3 = \texttt{?}\ \text{and}\ u \nmodels_\textsf{F} \varphi \\
 \end{cases}
\]
Intuitively, after a finite prefix $u$, a definitive answer ($\top$ or $\bot$) is unchangeable no matter how the prefix is extended, whereas a presumptive answer  ($\top^\textsf{p}$ or $\bot^\textsf{p}$ ) only applies if execution is stopped at that point. 

If the property in question is a safety property, then the only presumptive answer possible is $\top^\textsf{p}$, and likewise for co-safety properties and $\bot^\textsf{p}$. This means that for properties at the bottom of the safety-progress hierarchy~\cite{safetyprogress}, \ltlthree is sufficient, as the single $\unknown$ answer can be interpreted as $\top^\textsf{p}$ or $\bot^\textsf{p}$ respectively. However, as noted by Bauer et al.~\cite{bauercomparing}, there are monitorable properties such as $((p \lor q) \Until r) \lor \Always p$ for which both $\top^\textsf{p}$ and $\bot^\textsf{p}$ answers are possible (consider $\emph{qqqq}\dots$ and $\emph{pppp}\dots$).

Like \ltlthree previously, the semantics of RV-LTL is presented only in terms of other logics. We believe that an inductive semantics can be designed along similar principles to that of \ltlthree given in the present paper, where our answer indexed-families instead produce four sets, two of which are definitive, rather than the two definitive sets we provide for \ltlthree.

As noted by O'Connor and Wickstr\"om~\cite{quickstrom}, Pnueli's finite LTL is a logic of finite \emph{completed} traces, so the decision to judge partial traces as completed for the purpose of giving presumptive answers in RV-LTL is ad-hoc and can produce rather arbitrary answers for properties higher in the safety-progress hierarchy.  
For example, consider a system where a flashing light consistently alternates between \textsf{On} and \textsf{Off} states:
$$ \textsf{On}\ \textsf{Off}\ \textsf{On}\ \textsf{Off}\ \cdots $$
A simple property that we might wish to monitor for this system is that the light is $\textsf{On}$ infinitely often:
$$ \Always \Eventually \textsf{On} $$%
As this formula nests $\Always$ and $\Eventually$ operators, it is definitive in neither positive nor negative cases and will only give presumptive answers.  But the
presumptive answer given in RV-LTL depends only on the very last observed status of the light. For a trace where the light continuously alternates off and on, as above, we might intuitively say that presumptive answer ought to be true, but this formula would be considered presumptively false if our observation happens to end in a state where the light is off. Thus, the truth value obtained for this formula is overly sensitive to the point at which our finite observation ceases. 

One potential approach that may provide a more robust logic for finite traces would be to first decompose the property into \ltlthree-monitorable and non-monitorable components, and, where possible, combine the answers obtained by monitoring each monitorable component separately. Such decompositions are very general: for example, Alpern and Schneider~\cite{alpernschneider} famously prove that \emph{all properties} are the intersection of a safety (i.e.\ \ltlthree-monitorable) property and a liveness property. We conjecture that there will be some configuration of this approach whose answers coincide with RV-LTL, but it will be interesting future work to explore the design space here.
\section{Conclusion}
We have presented a new, inductive, model-based semantic accounting of \ltlthree in terms of answer-indexed families of definitive sets, and in the process shown that \ltlthree is more accurately described as a more detailed presentation of conventional LTL, rather than a distinct logic in its own right. We have formalised the popular formula progression technique used in runtime verification and testing scenarios, and proved it sound and complete with respect to our semantics. All of our work has been mechanised in over 1700 lines of Isabelle/HOL proof script.

We anticipate that our theory of definitive sets will provide a semantic foundation for other logics of partial traces, such as the LTL$^\pm$ of Eisner et al.~\cite{ltlpm}, QuickLTL from O'Connor and Wickstr\"om~\cite{quickstrom}, or the aforementioned RV-LTL~\cite{bauercomparing}. Our answer-indexed families may also be applicable to other multi-valued logics. Examples include rLTL~\cite{rltl}, RV-LTL~\cite{bauercomparing}, and the five-valued logic of Chai et al.~\cite{chai}. We intend, in future work, to develop logics that go beyond just the definitive prefixes of \ltlthree, giving presumptive or probabilistic answers when definitive answers are unavailable.
\nocite{*}
\bibliographystyle{eptcsalpha}
\bibliography{cites}

\end{document}